\documentclass[aps,nofootinbib]{revtex4}%
\usepackage{amsfonts}
\usepackage{amsmath}
\usepackage{amssymb}
\usepackage{bbold}
\usepackage{graphicx}
\usepackage{array}
\usepackage{hhline}
\usepackage{calc}
\usepackage{color}
\usepackage{comment}

\setcounter{MaxMatrixCols}{30}
\providecommand{\U}[1]{\protect\rule{.1in}{.1in}}
\newtheorem{theorem}{Theorem}

\newtheorem{proposition}[theorem]{Proposition}

\newenvironment{proof}[1][Proof]{\noindent\textbf{#1.} }{\ \rule{0.5em}{0.5em}}

\begin{document}

\title{Quasi-quantization: classical statistical theories with an epistemic restriction} 

\author{Robert W. Spekkens}
\affiliation{Perimeter Institute for Theoretical Physics, 31 Caroline St. N, Waterloo, \\
Ontario, Canada N2L 2Y5}

\date{Sept. 16, 2014}

\begin{abstract}
A significant part of quantum theory can be obtained from a single innovation relative to classical theories, namely, that there is a fundamental restriction on the sorts of statistical distributions over physical states that can be prepared.  This is termed an ``epistemic restriction'' 
because it implies a fundamental limit on the amount of knowledge that any observer can have about the physical state of a classical system.  This article provides an overview of {\em epistricted theories}, that is, theories that start from a classical statistical theory and apply an epistemic restriction.  We consider both continuous and discrete degrees of freedom, and
 show that a particular epistemic restriction called {\em classical complementarity}
provides the beginning of a unification of all known epistricted theories. 
This restriction appeals to the symplectic structure of the underlying classical theory and consequently can be applied to an arbitrary classical degree of freedom.
As such, it can be considered as a kind of \emph{quasi-quantization} scheme; ``quasi'' because 
it generally only yields a theory describing a subset of the preparations, transformations and measurements allowed in the full quantum theory for that degree of freedom, and because in some cases, such as for binary variables, it 
yields a theory that is a distortion of such a subset.
Finally, we propose to classify quantum phenomena as weakly or strongly nonclassical by whether or not they can arise in an epistricted theory.
\end{abstract}

\maketitle
\tableofcontents

\section{Introduction}

\subsection{Epistricted theories}



Start with a classical theory for some degree of freedom and consider the statistical theory associated with
it.  This is the theory that describes the statistical distributions over the space of physical states and how they change over time.  If one then postulates, as a fundamental principle, that there is a restriction on what kinds of
statistical distributions can be prepared, then the resulting theory reproduces a large part of
quantum theory, in the sense of reproducing precisely its operational predictions.  This article reviews recent work on such theories and their relevance for notions of nonclassicality, for the interpretation of the quantum state, and for the program of deriving the formalism of quantum theory from axioms. 


Some clarifications are in order regarding statistical theories.  Given a system whose physical state is drawn from some ensemble of possibilities, the statistical distribution associated to this ensemble can be taken either to describe relative frequencies of physical properties within the virtual ensemble or it can be taken to describe the knowledge that an agent has about an individual system when she knows that it was drawn from that ensemble.  The latter sort of language is preferred by those who take a Bayesian approach to statistics, and we shall adopt it here. 
The distinction between the physical state of a system and an agent's state of knowledge of that physical state will be critical in what follows.  As such, we will make use of some jargon to clearly distinguish the two sorts of states. 
Recalling the Greek terms for reality and for knowledge, \emph{ontos} and \emph{epist\={e}m\={e}}, we will henceforth refer to  physical states as \emph{ontic states} and states of knowledge as \emph{epistemic states}~\cite{spekkens2007evidence}.  The theory that governs the evolution of ontic states is an \emph{ontological theory}, while the {\em statistical theory} describes the evolution of epistemic states. A restriction on knowledge is an \emph{epistemic restriction}.  The theories we are considering, therefore, are epistemically-restricted statistical theories of classical systems.   Given that this is a rather unwieldy descriptor, we introduce the term \emph{epistricted theory} as an abbreviation to it.


It is worth considering in a bit more detail the scheme by which one infers from a given classical theory the epistricted version thereof. 
One starts with a particular classical ontological theory (first column of Table~\ref{Table:epistrictedtheories}).  We are here considering the usual notion of an ontological theory:
 one which provides a kinematics and a dynamics, that is, a hypothesis about the possible physical states that a system can occupy at a given time and a law describing how each state can evolve over time.  

One then constructs the statistical theory for the classical ontological theory under consideration (second column of Table~\ref{Table:epistrictedtheories}).  
The fundamental object here is a statistical distribution over the physical state space rather than a point in the physical state space, that is, an epistemic state rather than an ontic state. The statistical theory answers questions such as: If the physical state of the system undergoes deterministic dynamics, how does the statistical distribution change over time? Or, more precisely, if an agent assigns a statistical distribution over physical states at one time and she knows the dynamics, what statistical distribution should she assign at a later time?
If an agent implements a measurement on the system and takes note of the outcome, how should she update her statistical distribution?  

In the final and most significant step of the theory-construction scheme, 
 one postulates a fundamental restriction on the sorts of statistical distributions that can describe an agent's knowledge of the system (third column of Table~\ref{Table:epistrictedtheories}).   

As a first example, consider the classical ontological theory of particle mechanics.  The associated statistical theory is what is sometimes called \emph{Liouville mechanics}.
If one then postulates a classical version of the uncertainty principle as the epistemic restriction~\cite{bartlett2012reconstruction}, 
then one obtains a theory which we shall refer to here as \emph{Gaussian epistricted mechanics} (it was called  \emph{epistemically-restricted Liouville mechanics} in Ref.~\cite{bartlett2012reconstruction}).  This theory is equivalent to a subtheory of quantum mechanics, the Gaussian subtheory, which is defined in Ref.~\cite{bartlett2012reconstruction}.

The case of optics is a straightforward extension of the case of mechanics because each optical mode is a scalar field and the phase spaces of a field mode and of a particle are both Euclidean.
The canonically conjugate variables, which are position and momentum in the mechanical case,  are field quadratures in the optical case.  The statistical theory of optics is well-known~\cite{born1999principles}.  Upon postulating an epistemic restriction in the form of an uncertainty principle, one obtains the optical analogue of the Gaussian subtheory of quantum mechanics, namely, the Gaussian subtheory of quantum optics, which is sometimes referred to as \emph{linear quantum optics}. The latter theory includes a wide variety of quantum optical experiments.


\begin{table*}[htb]
  \begin{tabular}{|c|c|c|}
    \hline
    & {\bf Statistical theory} & {\bf Epistemically-restricted statistical theory }\\
    {\bf Classical ontological theory} & {\bf for the classical ontological theory} & {\bf for the classical ontological theory} 
         \tabularnewline
\hline
Mechanics & Liouville mechanics & Gaussian epistricted mechanics \\
&& \color{red} =Gaussian subtheory of quantum mechanics \\
&& Quadrature epistricted mechanics\\ 
&&  \color{red} =Quadrature subtheory of quantum mechanics \\
\hline
trits & Statistical Theory of trits & Quadrature epistricted theory of trits \\
&& \color{red} = Quadrature/Stabilizer subtheory for qutrits \\
\hline
Bits & Statistical Theory of bits & Quadrature epistricted theory of bits \\
&& \color{red} $\simeq$ Quadrature/Stabilizer subtheory for qubits \\
\hline
Optics & Statistical optics & Gaussian epistricted optics \\
&& \color{red} = Gaussian subtheory of quantum optics \\
& & Qudrature epistricted optics \\
&& \color{red} = quadrature subtheory of quantum optics \\
\hline    
  \end{tabular}
 \caption{Theories arising from imposing certain epistemic restrictions on statistical theories for various classical theories, and the subtheories of quantum theory that they correspond to. } 
 \label{Table:epistrictedtheories}
\end{table*}

One can apply the same strategy for a classical ontological theory wherein the fundamental degrees of freedom are discrete, so that every system has an integer number $d$ of ontic states.  It is unusual for physicists to discuss discrete degrees of freedom in a classical context.  Nonetheless, this is done when considering the possibility of models that are cellular automata.  It is also common when describing the physics of digital computers.  The language of computation, therefore, is a natural one for describing such a theory.  

The simplest case to consider is $d=2$, in which case the fundamental degree of freedom is a bit.  
 A collection of such fundamental degrees of freedom corresponds to a string of bits.  An interaction between two distinct degrees of freedom can be understood as a gate acting on two bits.  Similarly for interactions between $n$ systems.
 General dynamics, which corresponds to an arbitrary sequence of interactions, can be understood as a circuit.
The statistical theory of bits is just a theory of the statistical distributions over the possible bit-strings, how these evolve under gates, and how these are updated as a result of registering the outcome of measurements performed on the bits.  One then imposes a restriction on what kinds of statistical distributions can characterize an agent's knowledge of the value of the bit-string.  

This is the arena in which the first epistricted theory was constructed~\cite{spekkens2007evidence}.  
The restriction on knowledge was implemented through a principle that asserted that any agent could have the answers to at most half of a set of questions that would specify the ontic state of the system.  Consequently, when one has maximal knowledge, then the number of independent questions that are answered is equal to the number of independent questions that are unanswered; in this case, one's measure of knowledge is equal to one's measure of ignorance. 
This epistemic restriction was dubbed the \emph{knowledge-balance principle}, and the epistricted theory of bits that resulted was called a \emph{toy theory} in Ref.~\cite{spekkens2007evidence}.
This theory mirrors very closely a subtheory of the quantum theory of qubits, namely, the one which is known to quantum information theorists as the \emph{stabilizer formalism} and which we will term the \emph{stabilizer subtheory} of qubits.  It will be presented in Sec.~\ref{quadraturesubtheories}.  Stabilizer states are defined to be the eigenstates of products of Pauli operators, stabilizer measurements are measurements of commuting sets of products of Pauli operators, and stabilizer transformations are unitary transformations which take stabilizer states to stabilizer states.    Although the toy theory is not operationally equivalent to the stabilizer subtheory, it reproduces qualitatively the same phenomenology.  Also, the toy theory can be cast in the same sort of language as the stabilizer theory, as noted in Ref.~\cite{pusey2012stabilizer}.

Subsequent work sought to develop an epistricted theory for discrete systems with $d$ ontic states, where $d > 2$.   There were two natural avenues to pursue: generalize the knowledge-balance principle used in Ref.~\cite{spekkens2007evidence} or devise a discrete version of the classical uncertainty principle used in Ref.~\cite{bartlett2012reconstruction}.  
The former approach was pursued by van Enk~\cite{van2007toy}.\footnote{We are here refering to the first part of Ref.~\cite{van2007toy}.  In the second part, the author proposes a theory wherein there is a restriction on what can be known about the outcome of measurements, rather than a restriction on what can be known about some underlying ontic state.  As such, the latter theory is not an epistricted theory.}
However, some important work by Gross~\cite{gross2006hudson} established that it is possible to define a discrete phase space for a $d$-level systems where $d$ is an odd prime and it is possible to define a Wigner representation based on this phase space such that the stabilizer theory for these qudits admits of a nonnegative Wigner representation.  Gross's Wigner representation can be understood as a hidden variable model for the stabilizer subtheory.  This suggests that one should be able to define a classical theory of $d$-level systems using this discrete phase space and then to find an epistemic restriction that yields precisely this hidden variable model.  
In other words, Gross's work strongly suggests that one should look for an epistemic restriction that appeals to the phase-space structure, analogously to the epistemic restriction that was used in the Gaussian epistricted mechanics~\cite{bartlett2012reconstruction}.

Such an epistemic restriction was subsequently identified~\cite{Sch08}.  Using the phase-space structure, one can define \emph{quadrature variables} for the classical system.  The epistemic restriction then asserts that one can have joint knowledge of a set of quadrature variables if and only if they commute relative to a discrete analogue of the Poisson bracket.  The epistemic restriction is dubbed \emph{classical complementarity} and the theory that results is called the \emph{quadrature epistricted theory} of $d$-level classical systems. 

If we apply the complementarity-based epistemic restriction in the case of $d=2$, the resulting theory---the qudrature epistricted theory of bits---turns out to be equivalent to the toy theory of Ref.~\cite{spekkens2007evidence},
and as mentioned previously, this is operationally a close phenomenological cousin of the stabilizer theory of qubits.

On the other hand,  for $d$ an \emph{odd} prime, i.e., any prime besides 2, the quadrature epistricted theory reproduces \emph{precisely} the stabilizer theory for qudits.  For such values of $d$, the epistemic restriction of classical complementarity 
turns out to be inequivalent to the knowledge-balance principle. The latter specifies only that at most half of the full set of variables can be known, whereas the former picks out particular halves of the full set of variables, namely, the halves wherein all the variables Poisson-commute.  Because the restriction of classical complementarity actually reproduces the stabilizer theory for qudits while the knowledge-balance principle does not~\cite{Sch08}, epistemic restrictions based on the symplectic structure seem to be preferable to those based on a principle of knowledge balance. 

We will also show that on the quantum side, one can define the notion of a quadrature \emph{observable}, a quantum analogue of a classical quadrature variable.  In $d=2$, the Pauli operators are both unitary and Hermitian; as unitaries, they constitute the quantum analogue of classical phase-space displacements, while as observables, they correspond to our quadrature observables.   In $d>2$, on the other hand, the generalized Pauli operators are unitary but not always Hermitian and therefore cannot always be interpreted as observables. Consequently, the stabilizer of a state in $d>2$ specifies the unitaries that leave the state invariant, not the observables for which the state is an eigenstate.  In $d>2$, the quadrature observables are the ones that are defined in terms of the eigenbases of the generalized Pauli operators.  They provide a means for acheiving a characterization of stabilizer states for any $d$ as joint eigenstates of a commuting set of quadrature observables.  This characterization is more analogous to our characterization, in epistricted theories, of the valid epistemic states as states wherein one has joint knowledge of a Poisson-commuting set of quadrature variables.


Finally, the epistemic restriction of classical complementarity can also be applied to particle mechanics, where it is different from the restriction based on the uncertainty principle that is used in Ref.~\cite{bartlett2012reconstruction}.  In particular, a smaller set of statistical distributions are considered valid epistemic states.  Using the principle of classical complementarity, one obtains a different theory at the end, which we call \emph{quadrature epistricted mechanics}.  We prove that this is equivalent to a subtheory of quantum mechanics that we will call the  \emph{quadrature subtheory of quantum mechanics} and which we will describe in detail in Sec.~\ref{quadraturesubtheories}.  The latter stands to the Gaussian subtheory of quantum mechanics as the quadrature epistricted theory of mechanics stands to the Gaussian epistricted theory of mechanics.  One can similarly define analogous theories for optics.

It follows that the epistemic restriction of classical complementarity provides the beginning of a unification of all known epistricted theories. It can be applied for both continuous and discrete degrees of freedom, and the formalism can be made to look precisely the same in each case. 

It remains an open question whether one can find a form of the epistemic restriction that is applicable to an arbitrary degree of freedom and that when applied in the case of a $d$-level system yields the Stabilizer/quadrature subtheory of qudits while when applied in the case of continuous variable systems yields the Gaussian subtheory of quantum mechanics/optics rather than merely the quadrature subtheory. 

Guided by the bridge between the epistricted theories and the quantum subtheories, we present the formalism of the associated quantum subtheories in a unified manner for continuous and discrete degrees of freedom.  This presentation focusses on quadrature observables rather than stabilizer groups and helps to reveal the analogies between the subtheories for the different degrees of freedom.

For any epistemic restriction that is applicable to many different degrees of freedom, such as the principle of classical complementarity described here, one can think of the process of applying this restriction to the corresponding classical statistical theories as a kind of quantization scheme, or more precisely, a \emph{quasi-quantization} scheme. It is ``quasi'' because it does not succeed at obtaining the full quantum theory from its classical counterpart and because in certain cases, such as binary variables, it does not even yield a subtheory of quantum theory.\footnote{Note that for the purposes of this article, the term \textquotedblleft quantum theory\textquotedblright\ refers to a theory schema that can be applied to many different degrees of freedom: particles, fields and discrete systems.}  
Unlike normal quantization schemes, which are mathematically inspired, the quasi-quantization scheme of this approach is {\em conceptually} inspired.  There is no ambiguity about how to interpret the formalism that results.

Although our quasi-quantization scheme has already been applied to a few different sorts of degrees of freedom, it is clear that one could apply it to others.
Vector fields are a good example, one which promises the possibility of a quasi-quantization of classical electrodynamics.
By finding the appropriate epistemic restriction on a statistical theory of electrodynamics, one can imagine deriving a theory that might be equivalent to---or perhaps, as for the case of bits, merely analogous to---some subtheory of quantum electrodynamics\footnote{Note that the theory of \emph{stochastic electrodynamics} has some significant similarities to an epistricted theory of electrodynamics, but there are also significant differences. Many authors who describe themselves as working on stochastic electrodynamics posit a \emph{nondeterministic} dynamical law for the fields, whereas an epistricted theory of electrodynamics is one wherein agents merely lack knowledge of the electrodynamic fields, which continue to evolve deterministically. That being said, Boyer's version of stochastic electrodynamics~\cite{Boy80} does not posit any modification of the dynamical law and so is closer to what we are imagining here.  A second difference is that in stochastic electrodynamics, there is no epistemic restriction on the matter degrees of freedom.  However, if one degree of freedom can interact with another, 
then to enforce an epistemic restriction on one, it is necessary to enforce a similar epistemic restriction on the other.  In other words, the assumptions of stochastic electrodynamics were inconsistent. The sort of epistricted theory of electrodynamics we propose here is one that would apply the epistemic restriction to the matter and to the fields.}.  At present, it is not obvious how to do this because the epistemic restrictions that have worked best for the degrees of freedom considered thus far have made reference to canonically conjugate degrees of freedom.
One therefore expects to encounter precisely the same difficulties that were faced by those who attempted a canonical quantization of classical electrodynamics.  Presumably, therefore, it would be useful to develop a Lagrangian, or least-action quasi-quantization scheme in addition to the canonical one.
If one could succeed at devising an epistricted theory of electrodynamics, then it would of course be very interesting to attempt to apply quasi-quantization to classical theories of gravity.  This would not yield a full quantum theory of gravity, but it might reconstruct some subtheory, or a distorted version of such a subtheory.

The rest of the introduction makes explicit what can and cannot be explained in epistricted theories,
together with their significance for interpretation and axiomatization.  
We have put this material up front rather than at the end of the paper for the benefit of those readers 
who are reluctant to engage with the detailed development until they have had certain questions answered, in particular, 
questions about the precise explanatory scope of these epistricted theories, and the question of why one should care about a quantization scheme that does not recover the full quantum theory.


\subsection{Explanatory scope}


We return now to the claim that epistricted theories reproduce a ``large part'' of quantum theory.  
At this stage, a sceptic might be unconvinced 
on the grounds that for each classical ontological theory, the subtheory of the corresponding quantum theory that has been derived via this quantization scheme is \emph{far} from 
the full quantum theory.  For instance, Gaussian epistricted mechanics yields a part of quantum mechanics wherein the dynamics include
only those Hamiltonians that are at most quadratic in position and momentum observables~\cite{bartlett2012reconstruction}.  Clearly, this is a small subset of all possible Hamiltonians.  Nonetheless, we argue that the relative size of the space of Hamiltonians is not the correct metric by which to assess this project.  The primary object of the exercise is to 
achieve conceptual clarity on the principles that might underly quantum theory.  As such, it is better to ask: how many distinctively quantum phenomena are reproduced within these subtheories? In particular, how many of the phenomena that are usually taken to defy classical explanation?   In terms of the phenomena they include, 
the subtheories of quantum theory one obtains by an epistemic restriction 
\emph{do} subsume a large part of the full theory.  
In support of this claim, Table~\ref{tbl:categorization} provides a categorization of some prominent quantum phenomena into those that arise in epistricted theories (on the left), and those that do not (on the right).  
As one can easily see, for this particular list, the lion's share are found on the left, and this set includes many of the phenomena that are typically taken to provide the greatest challenge to the classical worldview.\footnote{It should be noted that many researchers had previously recognized the possibility of recovering many of these quantum phenomena if one compared quantum states to probability distributions in a classical statistical theory~\cite{caves9601025quantum,emerson,hardy1999disentangling, kirkpatrick2003quantal}}

\begin{table*}[htb]
  \begin{tabular}{|c|c|}
    \hline
    {\bf Phenomena arising in epistricted theories} & {\bf Phenomena not arising in epistricted theories} \tabularnewline
    \hline
Noncommutativity & Bell inequality violations  \tabularnewline
Coherent superposition & Noncontextuality inequality violations \tabularnewline
Collapse & Computational speed-up (if it exists) \tabularnewline
Complementarity & Certain aspects of items on the left \tabularnewline
No-cloning &  \tabularnewline
No-broadcasting & \tabularnewline
Interference & \\
Teleportation & \\
Remote steering & \\
Key distribution &\\
Dense coding &\\
Entanglement &\\
Monogamy of entanglement &\\
Choi-Jamiolkowski isomorphism &\\
Naimark extension &\\
Stinespring dilation &\\
Ambiguity of mixtures &\\
Locally immeasurable product bases &\\
Unextendible product bases &\\
Pre and post-selection effects &\\
Quantum eraser &\\
And many others... &\\
\hline    
  \end{tabular}
 \caption{Categorization of quantum phenomena.  }
 \label{tbl:categorization}
\end{table*}

Note that it is typically the case that if one looks hard enough at a given quantum phenomenon that appears on the left list, one can usually find \emph{some} feature of it that cannot be explained within an epistricted theory.
When we place a given phenomenon on the left, therefore, what we are claiming is that an epistricted theory
can reproduce \emph{the features of this phenomenon that are most frequently cited as making it difficult to understand classically}.  Consider the example of quantum teleportation.  What is most frequently taken to be mysterious about teleportation from a classical perspective is that the amount of information that is required to describe the quantum state exceeds the amount of information that is communicated in the protocol.  This is just as true, however, if one seeks to teleport a quantum state within the stabilizer theory of qubits: for a single qubit, this subtheory includes only \emph{six} distinct quantum states (rather than an infinite number), but the teleportation protocol still succeeds while communicating only \emph{two} bits of classical information, which is less than $\log_2{6}$ and hence not enough to describe a state drawn from this set.  As such, we judge the teleportation protocol in the stabilizer formalism to include the essential mystery of teleportation, and, because an epistricted theory can reproduce this notion of teleportation, we put teleportation on the left-hand list.  One can always point to features of the teleportation protocol in the full quantum theory that \emph{do not} arise in the stabilizer theory of qubits, for instance, the fact that it works for an infinite number of quantum states rather than just six.  However, this is not the feature of teleportation that is typically cited as ``the mystery''.  Such incidental features are the sorts of things that we mean to include on the right-hand side of our classification under ``certain aspects of items on the left''.  Of course, one of the lessons of this categorization exercise is that the usual story about what is mysterious about a given quantum phenomenon should be supplanted by one that highlights these more subtle features, and these should henceforth be our focus when puzzling about quantum theory.


The left-hand list includes basic quantum phenomena, such as non-commutativity of
observables, interference, coherent superposition, collapse of the wave
function, the existence of complementary bases, and a no-cloning theorem~\cite{wootters1982single}.
It also includes many quantum information processing tasks, such as teleportation~\cite{bennett1993teleporting} and key distribution~\cite{bennett1984quantum}.
A large part of entanglement theory~\cite{horodecki2009quantum} is there, as are more
exotic phenomena, such as locally indistinguishable product bases~\cite{bennett1999quantum} and unextendible product bases~\cite{bennett1999unextendible}. \ One also gets
many of the distinctive relations that hold between (and within) the sets of quantum states, quantum measurements, and quantum transformations, such as the Choi-Jamio\l{}kowski isomorphism
between bipartite states and unipartite operations~\cite{Choi1975,Jamiolkowski1972}, 
the Naimark extension of positive-operator valued measures into projector-valued measures~\cite{Naimark1940}, the Stinespring dilation of irreversible operations into reversible (unitary) operations~\cite{stinespring1955positive}, and the fact that
there are many convex decompositions and many purifications of a mixed state. 

On the right-hand side, we find Bell-inequality violations~\cite{Bell1964} and noncontextuality-inequality violations~\cite{kochen1967problem,spekkens2005contextuality,liang2011specker}.  This is expected, as these phenomena are the operational signatures of the impossiblity of locally causal and noncontextual ontological models respectively, whereas the particular sorts of epistemic restrictions that have been considered to date yield theories that satisfy local causality and noncontextuality (even the generalized sense of noncontextuality of Ref.~\cite{spekkens2005contextuality}). The right-hand side also includes quantum computational speedup,
 with the caveat that the claim of an exponential speed-up is predicated on certain unproven conjectures, such as the factoring problem being outside the complexity class P. 


There are also some phenomena that have not yet been conclusively categorized.  Two examples 
are: the quantization of quantities such as energy and angular
momentum and the statistics of indistinguishable particles. 

There is always some satisfaction in adding a quantum phenomenon to the left-hand list: it suggests that the idea of an epistemic restriction captures much of the innovation of quantum theory, that the phenomena in question is not so mysterious after all.  However, the right-hand list is the one that we would most like to see grow, because
the phenomena that appear there are the ones that still seem surprising, and it is by focusing on these that one can best develop the research program wherein quantum states are understood as states of knowledge.


Our quasi-quantization scheme sheds light on the old question ``what is the conceptual innovation of quantum theory relative to classical theories?''
In particular, 
it implies that the frontier between what \emph{can} and what \emph{cannot} be explained classically extends much deeper into  quantum territory than previously thought.  This is because in order to pronounce a phenomenon nonclassical, one should be maximally permissive in \emph{how} a classical theory manages to reproduce the phenomenon.  For instance, when considering whether certain operational statistics admit of a local or a noncontextual hidden variable model, one must allow an arbitrary space of ontic states, i.e., arbitrary hidden variables.  With the benefit of hindsight, one sees that previous assessments of the scope of classical explanations were 
overly pessimistic because
they did not consider the 
possibility that some phenomenon exhibited by quantum states was reproduced by classical \emph{epistemic states} rather than by classical \emph{ontic states.}

Phenomena arising in an epistricted theory might still be considered to exhibit a type of nonclassicality insofar as an epistemic restriction is, strictly speaking, an assumption that goes beyond classical physics.  
 But it is a weak type of nonclassicality, as it ultimately can be understood via a relatively modest addendum to the classical worldview.   By contrast, quantum phenomena that do not arise in epistricted theories constitute a strong type of nonclassicality, one which marks a significant departure from the classical worldview.  Table~\ref{tbl:categorization} may therefore be understood as sorting quantum phenomena into categories of weak and strong nonclassicality.
 
 \subsection{Interpretational significance}

Epistricted theories serve to highlight the existence (and the appeal) of a type of ontological model that has previously received almost no attention.   With the exception of a model of a qubit proposed by Kochen and Specker in 1967, previous ontological models have been such that the ontic state included a description of the quantum state, and therefore any two distinct pure quantum states necessarily described different ontic states.  This was true whether the model considered the space of ontic states to be precisely the space of pure quantum states, or whether the quantum state was supplemented by additional variables, such as occurs, for instance, in Bohmian mechanics.
 By contrast, in epistricted theories, two distinct pure quantum states that are nonorthogonal correspond to two probability distributions that overlap on one or more ontic states.  Indeed, it was the work on epistricted theories that led to the articulation of the  distinction between a $\psi$-ontic model, wherein the ontic state encodes the quantum state, and a $\psi$-epistemic model, wherein it does not~\cite{spekkens2007evidence,spekkens2005contextuality,Harrigan2010}.

For the subtheories of quantum theory described above (Gaussian and quadrature quantum mechanics and the stabilizer theory of qudits for $d$ an odd prime), $\psi$-epistemic models exist and provide a compelling causal explanation of the operational predictions of those theories.   The breadth of quantum phenomenology that is reproduced within epistricted theories suggests that something about the principles underlying these theories must be correct.  The assumption that quantum states should be interpreted as epistemic states rather than ontic states seems a good candidate. 

This in no way implies, however, that the innovation of quantum theory is \emph{merely} to impose an epistemic restriction on some underlying classical physics.  This is clearly \emph{not} the only innovation of quantum theory.  As we have noted, epistricted theories, considered as ontological models, are by construction  both local and noncontextual, and because of Bell's theorem and the Kochen-Specker theorem, we know that the full quantum theory cannot be explained by such models. If quantum computers really do allow an exponential speed-up over their classical counterparts, then this too cannot be reproduced by such models.  What the success of epistricted theories suggests, rather, is that pure quantum states and statistical distributions over classical ontic states are \emph{the same category of thing}, namely, an epistemic thing.  This is a point of view that has been central also to the ``QBist'' research program~\cite{caves2002quantum,Fuchs2003a,fuchs2013quantum}.


A question that naturally arises is whether one can construct a $\psi$-epistemic model of the \emph{full} quantum theory, or whether one can find natural assumptions under which such models are ruled out, This question was first posed by Lucien Hardy and was formalized in Ref.~\cite{Harrigan2010}.  It has become the subject of much debate in recent years~\cite{pusey2012reality,lewis2012distinct,colbeck2012system}.  It is worth noting, however, that the standard framework for such models contains many implicit assumptions,  including the idea
that the correct formalism for describing epistemic states is classical probability theory.  
This assumption can be questioned, and indeed, the nonlocality and contextuality of quantum theory already suggest that it should be abandoned, as argued in \cite{leifer2013towards,wood2012lesson}.  



The investigation of epistricted theories, therefore, need not---and indeed \emph{should not}---be considered as the first step in a research program that seeks to find a $\psi$-epistemic model of the full quantum theory.  
Even though such a model could always circumvent any no-go theorems by violating their assumptions, it would be just as unsatisfying as a $\psi$-ontic model insofar as it would need to be explicitly nonlocal and contextual.
Rather, the investigation of epistricted theories is best considered as a first step in a larger research program wherein the framework of ontological models---in particular the use of classical probability theory for describing an agent's incomplete knowledge---is ultimately rejected, but
where one holds fast to the notion that a quantum state is epistemic.
\subsection{Significance for the axiomatic program}

Most reconstruction efforts are focussed on recovering the formalism of the full quantum theory.
However, it may be that  there are particularly elegant axiomatic schemes that are not currently in our reach
and that the road to progress involves temporarily setting one's sights a bit lower. 
The quasi-quantization scheme described here only recovers certain subtheories of the full quantum theory, which include only a subset of the preparations, transformations and measurements that the latter allows.  However, such derivations may provide clues for more ambitious axiomatization schemes.  
In particular, it provides further evidence for the usefulness of foundational principles asserting a fundamental restriction on what can be known~\cite{caves2002quantum,zeilinger1999foundational,paterek2010theories}. 
It also seems to highlight the importance of symplectic structure, which is not currently a feature of any reconstruction program.





Furthermore, epistricted theories constitute a foil to the full quantum theory in two senses.  When they are operationally equivalent to a subtheory of the full quantum theory, they are still a foil in the sense that 
the universe might have been governed by this subtheory rather than the full theory.  
Why did nature choose the full theory rather than the subtheory?
When the epistricted theory is not operationally equivalent to the corresponding quantum subtheory, as in the case of bits, it is a foil not only to the full quantum theory, but to the quantum subtheory as well.  In this case, the epistricted theory describes a set of operational predictions that are not instantiated in our universe and again the question is: why nature did not avail itself of this option?

Because epistricted theories share so much of the operational phenomenology of quantum theory, they constitute points in the landscape of possible operational theories that are particularly close to quantum theory.  They are therefore particularly helpful in the project of determining what is unique about quantum theory.  For any purported attempt to derive the formalism of quantum theory from axioms, it is useful to ask which 
of the axioms rule out epistricted theories.  Axiom sets that may seem promising at first can often be ruled out immediately if they fail to pass this simple test. 









Finally, epistricted theories have significance for the problem of developing mathematical frameworks for describing the landscape of operational theories.   In particular, they provide a test of the \emph{scope} of any given framework.  
Broadness of scope is the key virtue of any framework
because an axiomatic derivation of quantum theory is only as  impressive as the size of the landscape within which it is derived.  
For instance, the formalism of $C^{*}$-algebras essentially includes only operational theories that are fully quantum or fully classical, or that are quantum within each of a set of superselection sectors and classical between these. This is a rather limited scope, and consequently axiomatizations within this framework are less impressive than those formulated within broader frameworks.

On this front, epistricted theories serve to highlight two deficiencies in the prevailing framework of convex operational theories.  

First, quadrature epistemic theories are an example of a \emph{possibilistic} or \emph{modal} theory, wherein one does not specify the \emph{probabilities} of measurement outcomes, but only which outcomes are \emph{possible} and which are \emph{impossible}.  This perspective on the toy theory of Ref.~\cite{spekkens2007evidence}, for instance, is emphasized in Ref.~\cite{coecke2011phase}.
  Possibilistic theories have recently been receiving renewed attention~\cite{mansfield2012hardy, abramsky2012logical,schumacher2012modal} because in the context of discussions of Bell's theorem they highlight the fact that quantum theory cannot merely imply an innovation to probability theory but must also imply an innovation to logic. 

Second, these epistricted theories have operational state spaces that are not convex.  They only allow certain mixtures of operational states. 
As such, they  cannot be captured by the prevailing framework of \emph{convex} operational theories~\cite{barrett2007information,hardy2001quantum} because these assume from the outset a convex state space.  On the other hand, epistricted theories can be captured by the category-theoretic framework for process theories~\cite{Coecke2009}, as shown in \cite{coecke2011toy}, or by the framework of general probabilistic theories described in Ref.~\cite{chiribella2010probabilistic}.  Epistricted theories therefore provide a concrete example of how the category-theoretic framework necessarily describes some real estate in the landscape of foil theories that is not on the map of the convex operational framework.

\section{Quadrature epistricted theories}

\subsection{Classical complementarity as an epistemic restriction}\label{complementarity}

The criterion on the joint knowability of classical variables that is used here is
inspired by the criterion on the \emph{joint measurability} of quantum observables. 

\begin{quote}
{\bf Guiding analogy}: \\
A set of observables is \emph{jointly measurable} if and only if it is commuting relative to the matrix commutator. \\
A set of variables is \emph{jointly knowable} if and only if it is commuting relative to the Poisson bracket. \\
\end{quote}

The full epistemic restriction that we adopt is a combination of this notion of joint knowability together with the restriction that the only variables that can be known by the agent are \emph{linear} combinations of the position and momentum variables.    We refer to such variables as \emph{quadrature variables}.\footnote{This terminology comes from optics, where it was originally used to describe a pair of variables that are canonically conjugate to one another.  It was inherited from the use of the expression in astronomy, where it applies to a pair of celestial bodies and describes the configuration in which they have an angular separation of 90 degrees as seen from the earth.}
We term the full epistemic restriction \emph{classical complementarity}.
\begin{quote}
{\bf Classical complementarity}: 
The valid epistemic states are those wherein an agent knows the values of a set of quadrature variables that commute relative to the Poisson bracket and is maximally ignorant otherwise.
\end{quote}




It is presumed that maximal ignorance corresponds to a probability distribution that is uniform over the region of phase-space consistent with the known values of the quadrature variables.  In the case of a phase-space associated to a continuous field, uniformity is evaluated relative to the measure that is invariant under phase-space displacements.
Hence a valid epistemic state is a uniform distribution over the ontic states that is consistent with a given valuation of some Poisson-commuting set of quadrature variables. 
 It is because of the uniformity of these distributions that they can be understood as merely specifying, for the given constraints,  which ontic states are possible and which are
impossible.  Consequently, the epistemic state in this case is aptly described as a \emph{possibilistic} state. 

There is a subtlety here. \ The epistemic restriction is assumed to apply \emph{only} to what an agent
can know about a set of variables based on information acquired entirely to
the past or entirely to the future of those variables.  It is not assumed to apply to what an agent can know about a set of variables based on pre- and post-selection. 
The same caveats on applicability hold for the quantum uncertainty principle, so this constraint on applicability is not unexpected.

To describe the epistemic restriction in more detail, we introduce some formalism.  The continuous and discrete cases are considered in turn.

\textbf{Continuous degrees of freedom.}
 Assume $n$ classical continuous degrees of freedom. \ The configuration
space is $\mathbb{R}^{n}$ and a particular configuration is denoted
\begin{equation}
({\tt q}_{1},{\tt q}_{2},\dots,{\tt q}_{n})\in \mathbb{R}^{n}.
\end{equation}
These could describe the positions of $n$ particles in a 1-dimensional space, or the positions of $n/3$ particles in a 3-dimensional space, or the amplitudes of $n$ scalar fields,
etcetera.  The associated phase space is%
\begin{equation}
\Omega\equiv\mathbb{R}^{2n}
\end{equation}
and we denote a point in this space by 
\begin{equation}
\textbf{m} \equiv ({\tt q}_{1},{\tt p}_{1},{\tt q}_{2},{\tt p}_{2},\dots,{\tt q}_{n},{\tt p}_{n})\in \Omega.
\end{equation}

We consider real-valued functionals over this phase space
\begin{equation}
f:\Omega\rightarrow\mathbb{R}\text{.}%
\end{equation}
In particular, the functionals associated with the position and momentum of
the $i$th degree of freedom are defined respectively by%
\begin{equation}
q_{i}({\bf m})={\tt q}_{i},\text{ }p_{i}({\bf m})={\tt p}_{i}.
\end{equation}
The Poisson bracket is a binary operation on a pair of functionals, defined
by\
\begin{equation}
\left[  f,g\right]_{\rm PB}  \left(  {\bf m}\right)  \equiv\sum_{i=1}^{n}\left(
\frac{\partial f}{\partial q_{i}}\frac{\partial g}{\partial p_{i}}%
-\frac{\partial f}{\partial p_{i}}\frac{\partial g}{\partial q_{i}}\right)
\left(  {\bf m}\right)  .
\end{equation}
In particular, we have
\begin{equation}\label{eq:CCRcontinuous}
\left[ q_i,p_j\right]_{\rm PB}  \left(  {\bf m}\right) = \delta_{i,j}.
\end{equation}


The assumption of classical complementarity incorporates a restriction on the sorts of functionals that an agent can know.  Specifically, an agent can only know the value of a functional that is \emph{linear} in
the position and momentum functionals, that is, those of the form%
\begin{equation}
f={\tt a}_{1}q_{1}+{\tt b}_{1}p_{1}+\dots+{\tt a}_{n}q_{n}+{\tt b}_{n}p_{n} +{\tt c},
\label{eq:linearfunctionals}%
\end{equation}
where ${\tt a}_{1},{\tt b}_{1},\dots,{\tt a}_{n},{\tt b}_{n},{\tt c} \in\mathbb{R}$.  (Note that functionals that differ only by addition of a scalar or by a multiplicative factor ultimately describe the same property.)
We will call these \emph{quadrature functionals} or \emph{quadrature variables}.   The vector of coefficients of the position and momentum functionals for a given quadrature functional will be denoted by the boldface of the
notation used for the functional itself.  The vector ${\bf f}$ specifying the position and momentum dependence of the quadrature functional $f$ defined in Eq.~\eqref{eq:linearfunctionals} is
\begin{equation}
{\bf f} \equiv ({\tt a}_{1},{\tt b}_{1},\dots,{\tt a}_{n},{\tt b}_{n}),
\end{equation}
such that if we define the vector of position and momentum functionals 
\begin{equation}\label{vectorqpfunctionals}
{\bf z} \equiv (q_{1},p_{1},\dots,q_{n},p_{n}),
\end{equation}
we can express $f$ as
\begin{equation}\label{eq:f}
f= {\bf f}^T {\bf z} + {\tt c}.
\end{equation}
Similarly, the action of the functional $f$ on a phase space vector ${\bf m}$ is given by
\begin{equation}
f({\bf m}) = {\bf f}^T {\bf m}+ {\tt c}.
\end{equation}
In other words, the space of quadrature functionals is the dual of the phase space $\Omega$, but each functional $f$ is associated with a vector in the phase space, ${\bf f} \in \Omega$.
Note that the vectors associated with the position and momentum functionals $q_i$ and $p_i$ are ${\bf q}_i\equiv (0,0,\dots,1,0,\dots,0,0)$ where the only nonzero component is ${\tt a}_i$ and ${\bf p}_i \equiv (0,0,\dots,0,1,\dots,0,0)$, where the only nonzero component is ${\tt b}_i$.

It is not difficult to see that the Poisson bracket of two quadrature functionals always evaluates to a functional that
is uniform over the phase space.  Its value is equal to the symplectic inner product
of the associated vectors,
\begin{equation}
\left[ f,g\right]  _{PB}({\bf m})=\langle {\bf f}, {\bf g} \rangle, \label{eq:PoissonSymplecticIP}%
\end{equation}
where
\begin{equation}
\langle {\bf f}, {\bf g} \rangle \equiv {\bf f}^{T}J{\bf g}, 
\end{equation}
with $T$ denoting transpose and $J$ denoting the skew-symmetric $2n\times 2n$ matrix with components $J_{ij} \equiv\delta _{i,j+1}-\delta _{i+1,j}$, that is,
\begin{equation}
\label{eq:SymplecticForm}
J \equiv
\begin{pmatrix}
0 & 1 &  0 & 0 & \dots \\
-1 & 0 &  0 & 0 & \\
0 & 0 & 0  & 1 & \\
0 & 0 & -1 & 0 &  \\
 \vdots & &  & & \ddots
\end{pmatrix}.
\end{equation}
(Note that $J$ squares to the negative of the $2n \times 2n$ identity matrix, $J^2 = -I$, it is an orthogonal matrix, $J^T J =I$,  it has determinant +1, and it has an inverse given by $J^{-1} = J^T =-J$.)
For instance, for $\Omega =\mathbb{R}^2$, if
${\bf f}=\left(  {\tt a},{\tt b}\right)  $ and ${\bf g}=\left(  {\tt a}^{\prime},{\tt b}^{\prime}\right)  ,$ then
$\langle {\bf f}, {\bf g} \rangle = 
{\tt a}{\tt b}^{\prime}-{\tt b}{\tt a}^{\prime}.$

The symplectic inner product on a phase space $\Omega$
is a bilinear form $\langle \cdot ,\cdot \rangle :\Omega  \times \Omega \rightarrow \mathbb{R}$ that is skew-symmetric ($\langle {\bf f}, {\bf g} \rangle = -\langle {\bf g}, {\bf f} \rangle $ for all ${\bf f}, {\bf g} \in \Omega$) and non-degenerate (if $\langle {\bf f}, {\bf g} \rangle  =0$ for all ${\bf g} \in \Omega,$ then ${\bf f}=0$). \ By equipping the vector space $\Omega$ with the
symplectic inner product, it becomes a symplectic vector space.  This connection to symplectic geometry allows us to provide a simple geometric interpretation of the Poisson-commuting sets of quadrature functionals, which we will present in Sec.~\ref{validepistemicstates}.



\textbf{Discrete degrees of freedom.} For discrete degrees of freedom, the formalism is precisely the same, except that
variables are no longer valued in the real field $\mathbb{R}$, but a finite field instead.  Recall that all finite fields have order equal to the power of a prime.  We shall consider here only the case where the order is itself a prime, denoted $d$, in which case the field is isomorphic to the integers modulo $d$, which we will denote by $\mathbb{Z}_{d}$. Therefore, the configuration space is $\left(  \mathbb{Z}_{d}\right)^n$, 
the associated phase space is 
\[
\Omega\equiv\left(  \mathbb{Z}_{d}\right)  ^{2n},
\]
the functionals have the form 
\[
f:\Omega\rightarrow  \mathbb{Z}_{d}  \text{,}
\]
and the linear functionals are of the form of Eq. (\ref{eq:linearfunctionals}),
\begin{equation}
f={\tt a}_{1}q_{1}+{\tt b}_{1}p_{1}+\dots+{\tt a}_{n}q_{n}+{\tt b}_{n}p_{n} + {\tt c},
\end{equation}
but where ${\tt a}_{1},{\tt b}_{1},\dots,{\tt a}_{n},{\tt b}_{n},{\tt c} \in\mathbb{Z}_d$ and
the sum denotes addition modulo $d$.
 It follows that the vector ${\bf f} \equiv ({\tt a}_1, {\tt b}_1, \dots, {\tt a}_n, {\tt b}_n)$ associated with the functional $f$ lives in the phase space $\Omega \equiv\left(  \mathbb{Z}_{d}\right)^n$ as well.

 The Poisson bracket, however, cannot be defined
in the conventional way, because without continuous variables we do not have a
notion of derivative. \ \ Nonetheless, one can define a discrete version of
the Poisson bracket in terms of finite differences.  For any functionals $f: \Omega \to \mathbb{Z}_d$ and $g: \Omega \to \mathbb{Z}_d$, their Poisson bracket, denoted $[f,g]_{PB}$, is also such a functional, the one defined by
\begin{equation}
\left[  f,g\right]  _{PB}({\bf m})\equiv\sum_{i=1}^{n}\left[
\begin{array}[c]{c}
\left(  f\left(  {\bf m}+{\bf q}_{i}\right)  -f\left(  {\bf m}\right)  \right)  \left(
g\left(  {\bf m} +{\bf p}_{i} \right)  -g\left(  {\bf m}\right)  \right) \\
-\left(  f\left(  {\bf m} +{\bf p}_{i}\right)  -f\left(  {\bf m}\right)  \right)  \left(
g\left( {\bf m}+{\bf q}_{i}\right)  -g\left(  {\bf m}\right)  \right)
\end{array}
\right],
\end{equation}
where the differences in this expression are evaluated with modular arithmetic.
The requirement that
\begin{equation}\label{eq:CCRdiscrete}
\left[ q_i,p_j\right]_{\rm PB}  \left(  {\bf m}\right) = \delta_{i,j},
\end{equation}
is clearly satisfied.  Furthermore, it is straightforward to verify that Eq.~(\ref{eq:PoissonSymplecticIP}) also
holds under this definition, so that one can relate the Poisson bracket in the discrete setting to the symplectic inner
product on the discrete phase space, $\langle \cdot ,\cdot \rangle :\Omega  \times \Omega \rightarrow \mathbb{Z}_d$.



\textbf{Simple examples.}  
It is useful to consider some simple examples of commuting pairs of quadrature variables, that is, some examples of 2-element sets $\left\{
f,g\right\}  $ such that $\left[  f,g\right]  _{PB}=0.$ \ Any quadrature variable defined
for system 1 commutes with any quadrature variable for system 2, e.g., the pair%
\begin{equation}
{\tt a}_{1}q_{1}+{\tt b}_{1}p_{1},\;\;\;{\tt a}_{2}q_{2}+{\tt b}_{2}p_{2}
\end{equation}
is a commuting pair for any values of ${\tt a}_{1},{\tt b}_{1},{\tt a}_{2},{\tt b}_{2}\in \mathbb{R}$ (or ${\tt a}_{1},{\tt b}_{1},{\tt a}_{2},{\tt b}_{2}\in\mathbb{Z}_d$).
 Additionally, there are commuting pairs of quadrature variables describing joint
properties of the two systems, for instance%
\begin{equation}
q_{1}-q_{2},\;\;\;p_{1}+p_{2}
\end{equation}
(when the field is $\mathbb{Z}_d$, the coefficient $-1$ is equivalent to $d-1$, so that $q_{1}-q_{2}=q_{1}+(d-1)q_{2}$). 

Another useful concept in the following will be that of \emph{canonically conjugate}
variables. A pair of variables are said to be canonically conjugate if
$\left[  f,g\right]  _{PB} = 1.$\ On a
single system, the pair of quadrature variables%
\begin{equation}
  {\tt a}q+{\tt b}p,\;\;\;-{\tt b}q+{\tt a}p
\end{equation}
are canonically conjugate for any values ${\tt a},{\tt b} \in \mathbb{R}$ (or ${\tt a},{\tt b} \in \mathbb{Z}_d$) such that ${\tt a}^2 + {\tt b}^2 =1$; in particular $\{q,p\}$ is such a pair.

Note that we were able to present these examples without specifying the nature of the field.  We will follow this convention of presenting results in a unified field-independent manner for the next few sections.



\subsection{Characterization of quadrature epistricted theories}

\subsubsection{The set of valid epistemic states}\label{validepistemicstates}

Using the connection between the Poisson bracket for quadrature functionals and the
symplectic inner product, one obtains a geometric interpretation of the
epistemic restriction and the valid epistemic states.

To specify an epistemic state one must specify: (i) the set of quadrature variables that are known
to that agent and (ii) the values of these variables.   We will consider each aspect in turn.

The epistemic restriction asserts that the only sets of variables that are jointly knowable are those that are Poisson-commuting (which is to say that every pair of elements in the set Poisson-commutes).
Note, however, that if every variable in a set has a known value, then any function of those variables also has a known value, in particular any linear combinations of those variables has a known value.  It follows that for any Poisson-commuting set of variables, we can close the set under linear combination and preserve the property of being Poisson-commuting.  In terms of the vectors representing these variables, this implies that we can take their \emph{linear span} while preserving the property of having vanishing symplectic inner product for every pair of vectors.   In terms of the symplectic geometry, a subspace all of whose vectors have vanishing symplectic inner product with one another is called an \emph{isotropic} subspace of the phase space.  Formally, a subspace $V\subseteq \Omega$ is isotropic if 
\begin{equation}
\forall {\bf f},{\bf g}\in V: \langle {\bf f}, {\bf g} \rangle =0.
\end{equation}
It follows that we can parametrize the different possible sets of known variables in terms of the isotropic subspaces of the phase space $\Omega$. 

For a $2n$-dimensional phase space, the maximum possible dimension of an isotropic subspace is $n$.  These are called \emph{maximally isotropic} or \emph{Lagrangian} subspaces.  This case corresponds to the maximal possible knowledge an agent can have according to the epistemic restiction.  The agent then knows a \emph{complete} set of Poisson-commuting variables, which is the analogue of measuring a complete set of commuting observables in quantum theory.  

For a given Poisson-commuting set of variables, define a basis of that set to be any subset containing linearly independent elements and from which the entire set can be obtained by linear combinations.  In the symplectic geometry, this corresponds to a vector basis for the associated isotropic subspace.   There are, of course, many choices of bases for a given isotropic subspace or Poisson-commuting set.

Next, we must characterize the possible value assignments to a Poisson-commuting set of quadrature variables.  That is, we must specify a linear functional $v$ acting on a quadrature functional $f$ 
and taking values in the appropriate field (continuous or discrete) such that $v(f)$ is the value assigned to $f$.  Denote the isotropic subspace of $\Omega$ that is associated to this Poisson-commuting set by $V$, such that ${\bf f} \in V$ is the vector associated with the quadrature function $f$. 
The set of value assignments corresponds precisely to the set of vectors in $V$.
  In other words,  for every vector ${\bf v} \in V$, which we call a  \emph{valuation vector}, we obtain a distinct value assignment $v$, via
\[
v(f) =  {\bf f}^T {\bf v}.
\]
To see this, it suffices to note that the ontic state of the system determines the values of all functionals and therefore the set of possible value assignments is given by the set of possible ontic states.  Specifically, each ontic state ${\bf m} \in \Omega$ defines the value assignment
\[
v_{\bf m}( f) =  {\bf f}^T {\bf m}.
\]
However, many different ontic states yield the same value assignment.  
Denoting the projector onto $V$ by $P_{V}$, we can express the relevant equivalence relation thus: the ontic states ${\bf m}$ and ${\bf m'}$ yield the same value assignment to the quadrature functionals associated to $V$ if and only if $P_{V} {\bf m} = P_{V} {\bf m'}$.  It follows that the set of possible value assignments to the associated to $V$ can be parametrized by the set of projections of all ontic states ${\bf m}\in \Omega$ into $V$, which is simply the set of ontic states in $V$.  This establishes what we set out to prove.

As an example, consider the case where we have two degrees of freedom, so that $\Omega$ is 4-dimensional, and suppose that the set of quadrature variables that are jointly known are the position variables, $\{ q_1, q_2\}$, and that these are known to each take the value $1$.  In this case, the associated isotropic subspace $V \subseteq \Omega$,
and the valuation vector ${\bf v}\in V$ are,  respectively,
\begin{align}
V  &  =\mathrm{span}\{{\bf q}_{1},{\bf q}_{2}\}\\
&= \mathrm{span}\{ (1,0,0,0), (0,0,1,0)\}\\
&=\{ ({\tt s},0,{\tt t},0): {\tt s},{\tt t} \in \mathbb{R}/\mathbb{Z}_d \}\\
{\bf v}  &  =(1,0,1,0).
\end{align}
These are depicted in green in Fig.~\ref{fig:Green1}.

\begin{figure}[h!]
 \center{   \includegraphics[width=3cm]{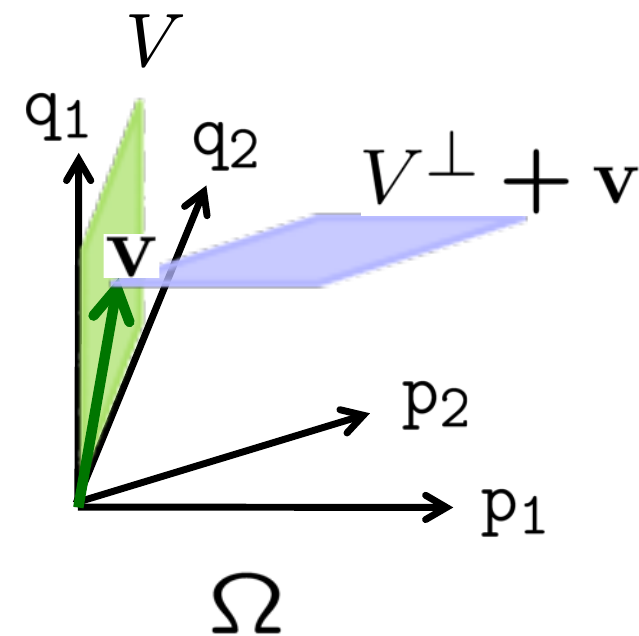}}
\caption{
A schematic of the 4-dimensional phase space of ontic states, $\Omega$, the isotropic subspace $V\subseteq \Omega$ associated with the known quadrature variables, the valuation vector ${\bf v}\in V$ specifying the values of the known variables and the manifold $V^{\perp} + {\bf v}$ corresponding  to the ontic support of the associated epistemic state.
}
 \label{fig:Green1}
\end{figure}



Next, we consider a given epistemic state, where the known quadrature variables are specified by an isotropic subspace $V \subseteq \Omega$, 
and their values are specified by ${\bf v} \in V$, 
and ask: what probability distribution over the phase space $\Omega$ does it correspond to?  
Recalling that this probability distribution should be maximally uninformative relative to the given constraint, 
 the answer is simply a uniform distribution on the set of all ontic states that yield this value assignment, that is, on the set
\begin{align}\label{setofonticstates}
&\{ {\bf m}\in \Omega :  {\bf f}^T {\bf m} = {\bf f}^T {\bf v} \;\forall {\bf f} \in V \} \nonumber \\
&  =\left\{  {\bf m}\in\Omega:P_{V}{\bf m}={\bf v}\right\}.
\end{align}
If we denote the subspace of $\Omega$ that is orthogonal to $V$ (relative to the Euclidean
inner product) by $V^{\perp},$ and we denote the translation of a subspace $W$
by a vector ${\bf v}$ as $W+{\bf v}\equiv\left\{  {\bf m}:{\bf m}={\bf w}+{\bf v},\text{ } {\bf w}\in W\right\}  ,$ then
it is clear that the set of ontic states of \eqref{setofonticstates} is simply
\[
V^{\perp}+{\bf v}.
\]

For instance, in the example described above, 
\begin{align}
V^{\perp}  
&= \{ (0,{\tt s},0,{\tt t}): {\tt s},{\tt t} \in \mathbb{R}/\mathbb{Z}_d \},
\end{align}
and consequently the set of ontic states consistent with the agent's knowledge is
\begin{equation}
V^{\perp}+{\bf v} \equiv \left\{  \left(  1,{\tt s},1,{\tt t}\right)  :{\tt s},{\tt t} \in\mathbb{R}/\mathbb{Z}
_{d}\right\},
\end{equation}
which is depicted in blue in Fig.~\ref{fig:Green1}.  

As a probability distribution over $\Omega$, the epistemic state associated to $\left(  V,{\bf v}\right)$ has the following form
\begin{equation}\label{epiprod}
\mu_{V,{\bf v}}\left(  {\bf m}\right)  = \frac{1}{\mathcal{N}_V } \delta_{V^{\perp}+{\bf v}}({\bf m})
\end{equation}
where we have introduced the notation
\begin{equation}\label{defndelta}
\delta_{V^{\perp}+{\bf v}}({\bf m}) \equiv \prod_{{\bf f}^{(i)}: {\rm span}\{ {\bf f}^{(i)} \} =V} \delta ( {\bf f}^{(i)T} {\bf m} - {\bf f}^{(i)T}{\bf v}), 
\end{equation}
where in the discrete case $\delta ({\tt c})=1$ if ${\tt c}=0$ and $\delta ({\tt c})=0$ otherwise, while  in the continuous case $\delta$ denotes a Dirac delta function.   In this expression, $\{ {\bf f}^{(i)}  \}$ can be any basis of $V$.
Geometrically, $\mu_{V,{\bf v}}$ is simply the uniform distribution over the ontic states  in $V^{\perp}+{\bf v}$.


Some epistemic states are seen to be mixtures of others in this theory.  A valid epistemic state is termed \emph{pure} if it is convexly
extremal among valid epistemic states, that is, if it cannot be formed as a
convex combination of other valid opistemic states. \ Non-extremal epistemic
states are termed \emph{mixed}. \ Note that we are judging extremality
relative to the set of valid epistemic states, not relative to the set of all epistemic states.  
 In our approach, the pure epistemic states are those corresponding to maximal knowledge, that is, knowledge associated to a complete set of Poisson-commuting quadrature variables.  Note, however, that because of the epistemic restriction, maximal knowledge is always incomplete knowledge.  

\subsubsection{The set of valid transformations}

In addition to specifying the valid epistemic states, we
must also specify what transformations of the epistemic states are allowed in our theory.  
To begin with, we consider the reversible transformations on an isolated system. 

Suppose an agent knows the precise ontological dynamics of a system over some period of time.  This transformation is represented by a bijective map on the ontic state space, and this induces a bijective map on the space of epistemic states.  

Because we assume that the underlying ontological theory has symplectic structure, it follows that the allowed transformations must be within the set of \emph{symplectic transformations} (sometimes called \emph{symplectomorphisms}).  
The requirement that the epistemic restriction must be preserved under the transformation implies that the valid transformations are a subset of the symplectic transformations, namely, those that map the set of quadrature variables to itself.  
Each such transformation can be represented in terms of its action on the
phase space vector ${\bf m}\in\Omega$\ as
\begin{equation}\label{symplecticaffine}
{\bf m}\mapsto S{\bf m}+{\bf a}
\end{equation}
where ${\bf a} \in\Omega$ is a phase-space displacement vector and where $S$ is a $2n \times 2n$ \emph{symplectic
matrix}, that is, one which preserves the symplectic form $J$ defined in Eq.~\eqref{eq:SymplecticForm},
\begin{equation}
S^{T}JS=J,
\end{equation}
or equivalently, one which preserves symplectic inner products, i.e., $\left(
S{\bf m}\right)^{T}J\left(  S{\bf m}^{\prime}\right)  ={\bf m}^{T}J{\bf m}^{\prime}\;\forall {\bf m},{\bf m}' \in \Omega$.
These are combinations of phase-space rotations and phase-space displacements.   



Equation~\eqref{symplecticaffine} describes an affine transformation, but it does not include all such transformations because $S$ is not a general linear matrix.  Following \cite{gross2006hudson}, we call transformations of the form of Eq.~\eqref{symplecticaffine} \emph{symplectic affine transformations}.  
Two such transformations, $S \cdot +{\bf a}$ and $S' \cdot +{\bf a}'$ compose as
\begin{equation}\label{sympaffine}
S \left(  S' \cdot +{\bf b'} \right) +{\bf b}  = SS' \cdot + (S{\bf b}' + {\bf b}).
\end{equation}
The inverse of a symplectic matrix $S$ is $S^{-1} = J^T S^T J$, and the inverse of the phase-space displacement ${\bf a}$ is of course $-{\bf a}$.
We call the resulting group of transformations the \emph{symplectic affine group}.



If the epistemic state is described by a probability distribution/density over ontic states, $\mu : \Omega \to \mathbb{R}_+$, then under the ontological transformation
${\bf m} \mapsto S{\bf m} + {\bf a}$, the transformation induced on the epistemic state is 
\begin{equation}
\mu ({\bf m}) \mapsto \mu'({\bf m}) = \mu (S^{-1}{\bf m} -{\bf a}),
\end{equation}
\color{black}
We can equivalently represent this transformation by a conditional probability distribution $\Gamma_{S,{\bf a}}: \Omega \times \Omega \to  \mathbb{R}_+$, that is,
\begin{equation}
\mu'({\bf m}) = \int {\rm d}{\bf m}' \Gamma_{S,{\bf a}}({\bf  m}|{\bf m}') \mu({\bf m}'),
\end{equation}
where
\begin{equation}\label{Cliffordcond}
\Gamma_{S,{\bf a}}({\bf  m}|{\bf m}') = \delta({\bf m} - (S{\bf m}' + {\bf a})).
\end{equation}

There is a subtlety worth noting at this point. The map $\mu \mapsto \mu'$ on the space of probability distributions, which is induced by the map ${\bf m} \mapsto S{\bf m} + {\bf a}$ on the space of ontic states, has the following property: it maps the set of valid epistemic states (those satisfying the classical complementarity principle) to itself.  However, not every map from the set of valid epistemic states to itself can be induced by some map on the space of ontic states.  A simple counterexample is provided by the map corresponding to time reversal. \ For a single degree of freedom, time reversal is represented by the map ${\bf m} = ({\tt q},{\tt p}) \mapsto {\bf m}' = ({\tt q},-{\tt p})$, 
which obviously fails to preserve the symplectic form.  In terms of symplectic geometry, it is a reflection rather than a rotation in the phase space.  Nonetheless, it maps isotropic subspaces to isotropic subspaces and therefore it also maps valid epistemic states to valid epistemic states.  Therefore, in considering a given map on the space of distributions over phase space, it is not sufficient to ensure that it takes valid epistemic states to valid epistemic states, one must also ensure that it arises from a possible ontological dynamics.  We say that the map must \emph{supervene} upon a valid ontological transformation~\cite{bartlett2012reconstruction}.

Note that if the phase space is over a {\em discrete} field, then the
transformations must be discrete in time. \ Only in the case of continuous
variables can the transformations be continuous in time and only in this case can
they be generated by a Hamiltonian. 

In addition to transformations corresponding to reversible maps over the epistemic states, there are also transformations corresponding to irreversible maps. 
These correspond to the case where information about the system is lost. 
The most general such transformation corresponds to adjoining the system to an ancilla that is prepared in a quadrature state, evolving the pair by some symplectic affine transformation that involves a nontrivial coupling of the two, and finally marginalizing over the ancilla.
The reason this leads to a loss of information about the ontic state of the system is that the transformation of the system depends on the initial ontic state of the ancilla, and the latter is never completely known, by virtue of the epistemic restriction.  



\subsubsection{The set of valid measurements}

We must finally address the question of which \emph{measurements} are consistent with our epistemic restriction.  We will distinguish sharp and unsharp measurements.  The sharp measurements are the analogues of those associated with projector-valued measures in quantum theory and can be defined as those for which the outcome is deterministic given the ontic state. The unsharp measurements are the analogues of those in quantum theory that cannot be represented by a projector-valued measure but instead require a positive operator-valued measure; they can be defined as those for which the outcome is not deterministic given the ontic state.  

We begin by considering the valid \emph{sharp} measurements.  
Without the epistemic restriction, one could imagine the possibility
of a sharp measurement that would determine the values of
\emph{all} quadrature variables, and hence also determine what the ontic state of the system was prior to the measurement. \ Given classical complementarity, however, one can only jointly retrodict the values of a set of quadrature variables  if these are a Poisson-commuting set, and therefore the only sets of  quadrature variables that can be jointly measured are the Poisson-commuting sets. 

Given that every Poisson-commuting set of quadrature variables defines an isotropic subspace, the valid sharp measurements are parametrized by the isotropic subspaces.  Furthermore, the possible joint value-assignments to a Poisson-commuting set of variables associated with isotropic subspace $V$ are parametrized by the vectors in $V$, so that the outcomes of the measurement associated with $V$ are indexed by ${\bf v} \in V$. 

Such measurements can be represented as a conditional probability, specifying the probability of each outcome ${\bf v}$ given the ontic state ${\bf m}$, namely, 
\begin{equation}\label{responsefns}
\xi_{V}\left({\bf v}|  {\bf m}\right)  =
\delta_{V^{\perp} + {\bf v}}({\bf m}),
\end{equation}
where $\delta_{V^{\perp}+{\bf v}}({\bf m})$ is defined in Eq.~\eqref{defndelta}.
We refer to 
 the set $\{ \xi_V ({\bf v} | {\bf m}): {\bf v}\in V\}$, considered as functions over $\Omega$, as the \emph{response functions} associated with the measurement.


The set of all valid {\em unsharp} measurements can then be defined in terms of the valid sharp measurements as follows.  An unsharp measurement on a system is valid if it can be implemented by adjoining to the system an ancilla that is described by a valid epistemic state, coupling the two by a symplectic affine transformation, and finally implementing a valid sharp measurement on the system+ancilla.  
Note that this construction of unsharp measurements from sharp measurements on a larger system is the analogue of the Naimark dilation in quantum theory.  

A full treatment of measurements would include a discussion of how the epistemic state is updated when the system survives the measurement procedure, but we will not discuss the transformative aspect of measurements in this article.

\subsubsection{Operational statistics}

Suppose that one prepares a system with phase space $\Omega$ in the epistemic state $\mu_{V,{\bf v}}({\bf m})$ associated with isotropic subspace $V$ and valuation vector ${\bf v}$, and one subsequently implements the sharp measurement associated with the isotropic subspace $V'$.  What is the probability of obtaining a given outcome ${\bf v}' \in V'$?  The answer follows from an application of the law of total probability.  The probability is simply
\begin{eqnarray}\label{opstat1}
&&\sum_{{\bf m} \in \Omega} \xi_{V'}({\bf v}' | {\bf m}) \mu_{V,{\bf v}}({\bf m}).\nonumber\\
\end{eqnarray} 
If a symplectic affine transformation ${\bf m} \mapsto S{\bf m} +{\bf a}$ is applied between the preparation and the measurement, the probability of outcome ${\bf v}'$ becomes
\begin{eqnarray}\label{opstat2}
&&\sum_{{\bf m} \in \Omega} \xi_{V'}({\bf v}' | {\bf m}) \sum_{{\bf m}' \in \Omega} \Gamma_{S,{\bf a}}({\bf m}|{\bf m}')  \mu_{V,{\bf v}}({\bf m}').
 \nonumber\\
\end{eqnarray} 

These statistics constitute the operational content of the quadrature epistricted theory.


\subsection{Quadrature epistricted theory of continuous variables}

We now turn to concrete examples of quadrature epistricted theories for particular choices of the field.
In this section, we consider the 
case of a phase space of $n$ {\em real} degrees of freedom,
$\Omega=\mathbb{R}^{2n}.$ 
We begin by discussing the valid epistemic states for a single degree of
freedom, $n=1$. \ In this case, the phase space is 2-dimensional and the isotropic subspaces are the set of 1-dimensional
subspaces. \ We have depicted a few examples in Fig.~\ref{fig:CVsingledofs}. \ The isotropic
subspace $V$ is depicted in light green, the valuation vector ${\bf v}$ is depicted as a
dark green arrow, and the set $V^{\perp}+{\bf v}$ of ontic states in the support of the associated
epistemic state is depicted in blue. \ Fig.~\ref{fig:CVsingledofs}(a) depicts a
state of knowledge wherein position is known (and hence momentum is unknown). Fig.~\ref{fig:CVsingledofs}(b) depicts the vice-versa. \ Fig.~\ref{fig:CVsingledofs}(c) corresponds to knowing the value of a quadrature $\left(  \cos\theta\right) q+\left(  \sin\theta\right) p$ (and hence
having no knowledge of the canonically conjugate quadrature $-\left(
\sin\theta\right)  q+\left(  \cos\theta\right)  p$).  Finally, an agent could know nothing at all, in which case the epistemic state
is just the uniform distribution over the whole phase space, as depicted in Fig.~\ref{fig:CVsingledofs}(d).

\begin{figure}[h!]
 \center{   \includegraphics[width=12cm]{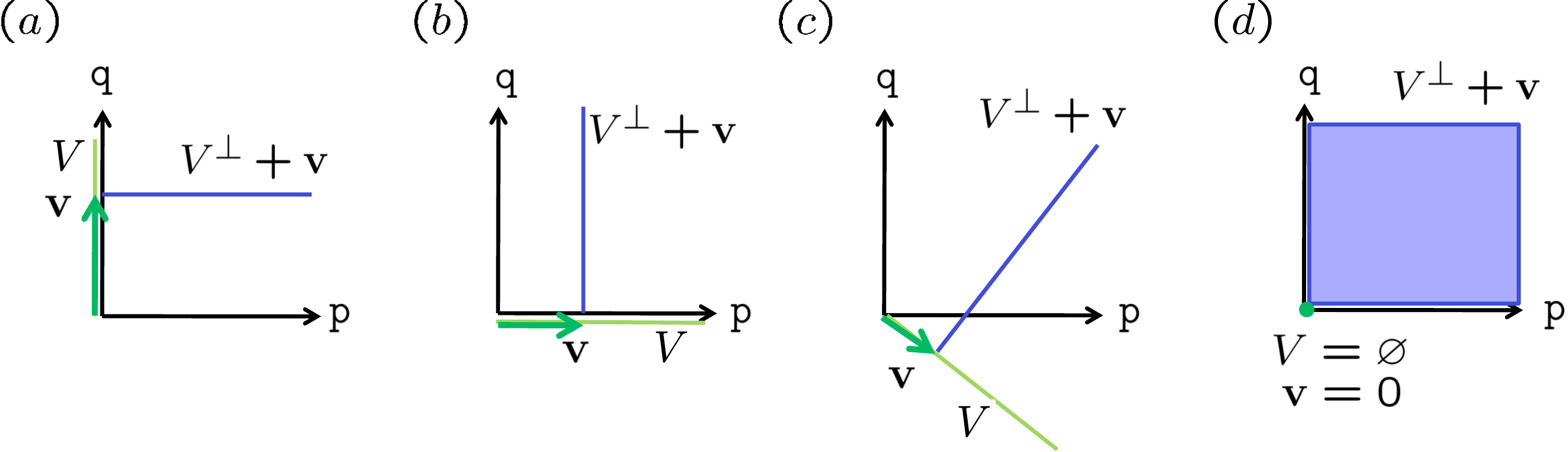}}
\caption{Examples of valid epistemic states for a single continuous variable system.}
\label{fig:CVsingledofs}
\end{figure}

If one considers a pair of continuous degrees of freedom, then it becomes
harder to visualize the epistemic states because the phase space is
4-dimensional. Nonetheless, we present 3-dimensional
projections as a visualization tool. \ We know that for every pair of isotropic subspace and valuation vector, $(V,{\bf v})$,
there is a distinct epistemic state. \ In Fig.~\ref{fig:CVtwodofs}(a), 
we depict the example where $q_1$ and $q_2$ are the known variables and both take the value 1, so that 
$V=\mathrm{span}\left\{  {\bf q}_{1},{\bf q}_{2}\right\}=\mathrm{span}\left\{  (1,0,0,0), (0,0,1,0)\right\}  $
 and ${\bf v}=(1,0,1,0)$,
while in Fig.~\ref{fig:CVtwodofs}(b), it is $q_{1}-q_{2}$ and $p_{1}
+p_2$ that are the known variables and both take the value 1, so that  
$V=\mathrm{span}\left\{  {\bf q}_{1}-{\bf q}_{2},{\bf p}_{1}+{\bf p}_{2}\right\}=\mathrm{span}\left\{  (1,0,-1,0), (0,1,0,1)\right\} $
 and ${\bf v}=\left( \tfrac{1}{2},\tfrac{1}{2},-\tfrac{1}{2},\tfrac{1}{2}\right)  .$\ \ In the example of Fig.~\ref{fig:CVtwodofs}(c),
only a single variable, $q_1$, is known and takes the value 1, so that  $V=\mathrm{span}\{{\bf q}_1 \}=\mathrm{span}\left\{ (1,0,0,0) \right\}  $ and
${\bf v}=(1,0,0,0)$.


\begin{figure}[h!]
 \center{   \includegraphics[width=12cm]{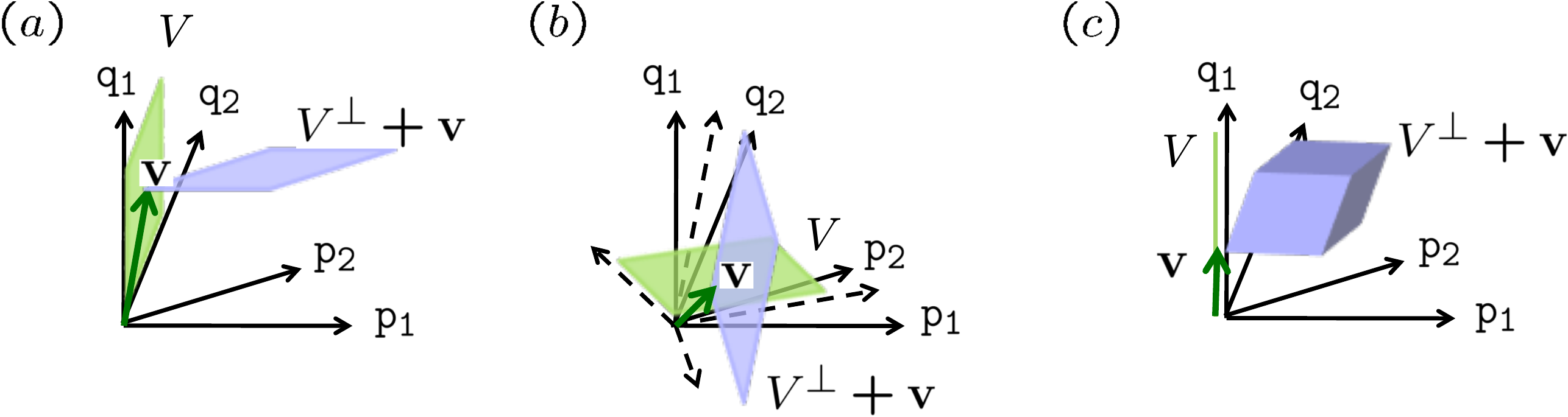}}
\caption{Examples of valid epistemic states for a pair of continuous variable systems.}
\label{fig:CVtwodofs}
\end{figure}


\subsection{Quadrature epistricted theory of trits}

We turn now to discrete systems. 
We begin with the case where the configuration space of
every degree of freedom is three-valued, i.e., a \emph{trit}, and represented therefore by $\mathbb{Z}_{3}$, the integers modulo 3.  The configuration space of $n$ degrees of freedom is $\mathbb{Z}_{3}^n$ and the phase space is $\Omega=\left(  \mathbb{Z}_{3}\right)  ^{2n}.$

For a single system ($n=1)$, we can depict $\Omega$ as a 3$\times3$ grid.  Consider
all of the quadrature functionals that can be defined on such a system.  They are of the form $f={\tt a}q+{\tt b}p+{\tt c}$ where ${\tt a},{\tt b},{\tt c} \in \mathbb{Z}_3$.  Some of
these functionals partition the phase-space in equivalent ways.
It suffices to look at the inequivalent quadrature functionals.  There are four of these:
\begin{equation}
q,\;p,\;q+p,\;q+2p.
\end{equation}
Note that because addition is modulo 3, $q+2p$ could
equally well be written $q-p.$

\begin{figure}[h!]
 \center{   \includegraphics[width=8cm]{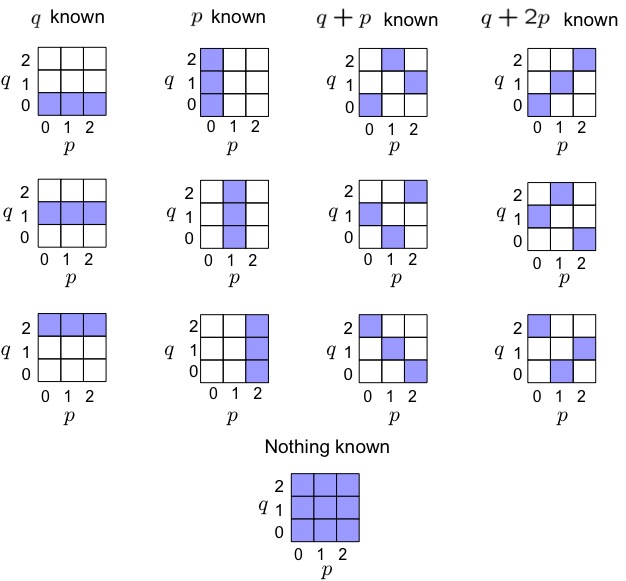}}
\caption{All the valid epistemic states for a single trit. There are twelve states of maximal knowledge (one variable known) and a single state of nonmaximal knowledge (no variable known).}
\label{fig:epistemicstates1toytrit}
\end{figure}

Because no two of these functionals Poisson-commute, the principle of classical complementarity implies that an agent can know the value of at most one of these variables. \ It follows that there are twelve pure epistemic states,
depicted in Fig.~\ref{fig:epistemicstates1toytrit}.  
The only mixed state is the state of complete ignorance.
Here we depict in blue the ontic
states in the support of the epistemic state. \ We have not explicitly
depicted the isotropic subspace and valuation vector, but these are analogous to what we had in the continuous variable case.

Next, we can consider \emph{pairs} of trits $(n=2)$. The quadrature variables are linear
combinations of the positions and momentum of each, with coefficients drawn
from $\mathbb{Z}_3$. Just as in the continuous case, one now has quadrature
variables that describe joint properties of the pair of systems.
The complete sets of Poisson-commuting variables now contain a pair of variables. 
Rather than attempting to portray the 4-dimensional phase space, as we did
in the continous case, we can depict each 2-dimensional symplectic subspace
along a line, as in Fig.~\ref{fig:Sudokupuzzle}. This is the \textquotedblleft Sudoku
puzzle\textquotedblright\ depiction of the two-trit phase space.

\begin{figure}[h!]
 \center{   \includegraphics[width=8cm]{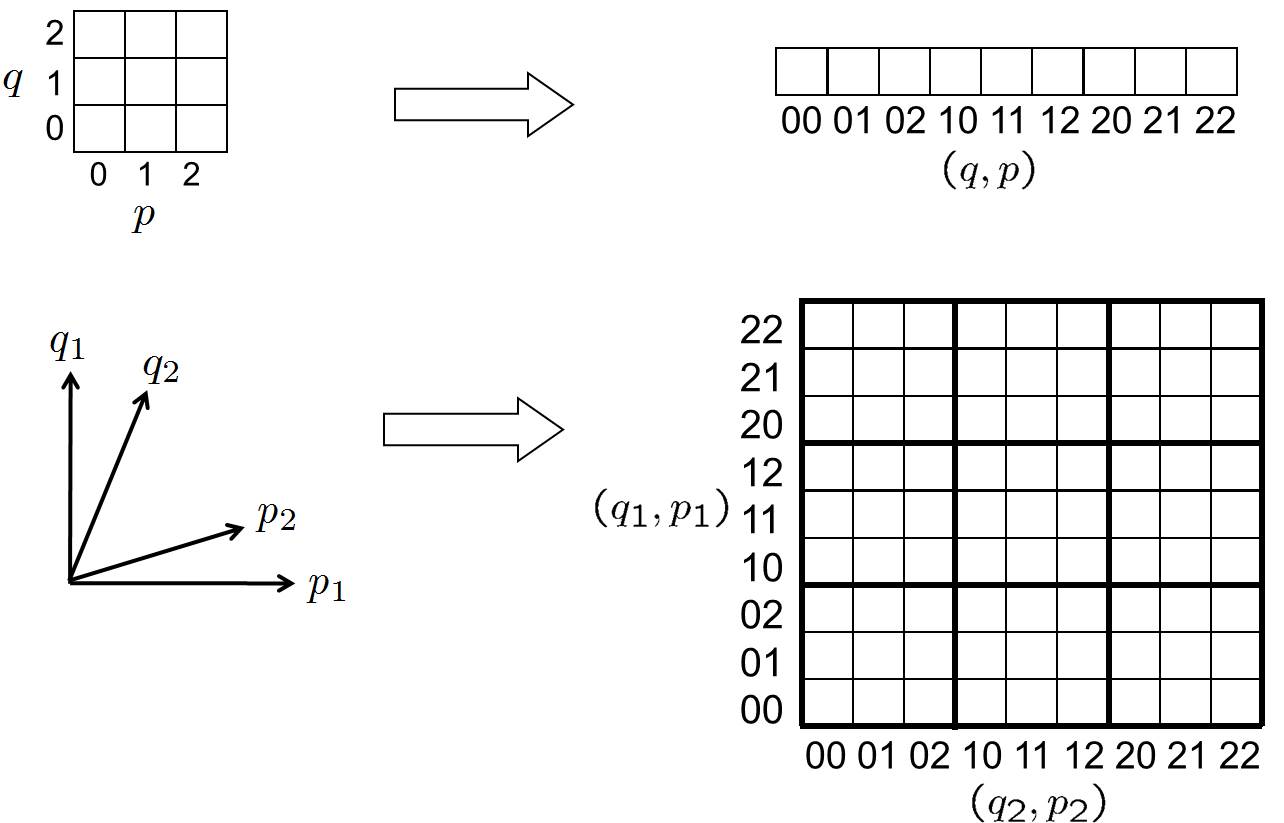}}
\caption{Embedding a 4-dimensional discrete phase space in a 2-dimensional grid resembling a Sudoku puzzle.}
\label{fig:Sudokupuzzle}
\end{figure}

Figs.~\ref{fig:twotoytrits}(a), \ref{fig:twotoytrits}(b), \ref{fig:twotoytritsentangled}(a), and \ref{fig:twotoytritsentangled}(b) each depict a mixed epistemic state, wherein the value of a single
quadrature variable is known. \ Figs.~\ref{fig:twotoytrits}(c) and  ~\ref{fig:twotoytritsentangled}(c) depict pure epistemic states, wherein the
values of a pair of Poisson-commuting variables are known.  If one of the pair of known variables refers to the first subsystem and the other refers to the second subsystem, as in Fig.~\ref{fig:twotoytrits}(c), the epistemic state corresponds to a product state in quantum theory.  If both of the known variables describe joint properties of the pair of trits, as in Fig.~\ref{fig:twotoytritsentangled}(c), the epistemic state corresponds to an entangled state.

\begin{figure}[h!]
 \center{   \includegraphics[width=8cm]{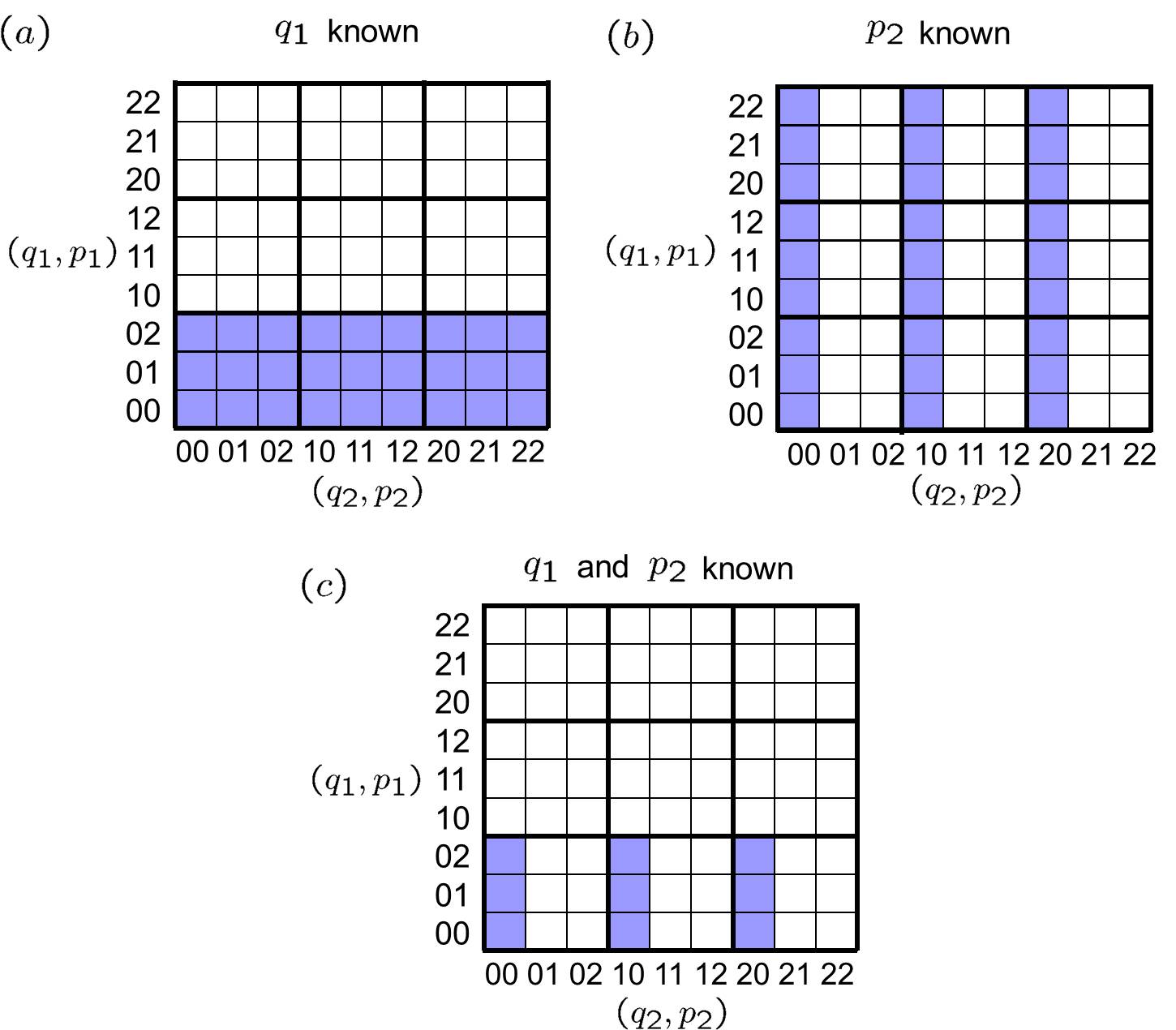}}
\caption{Examples of epistemic states for two trits.  (a) and (b) are examples where one variable is known. (c) is an example where two variables are known, corresponding to a product quantum state.}
\label{fig:twotoytrits}
\end{figure}

\begin{figure}[h!]
 \center{   \includegraphics[width=8cm]{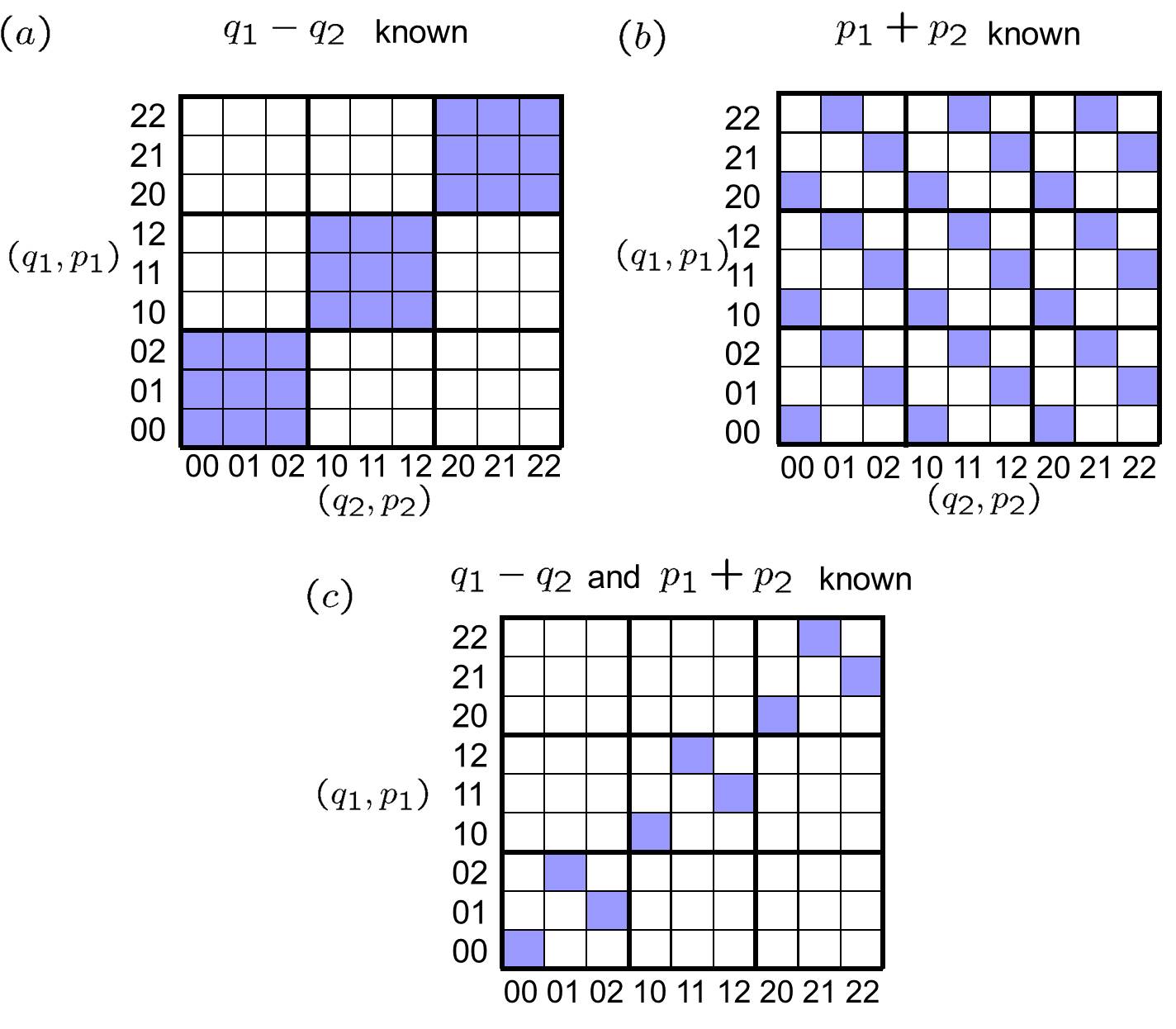}}
\caption{Examples of epistemic states for two trits.  (a) and (b) are examples where one variable is known. (c) is an example where two variables are known, corresponding to an entangled quantum state}
\label{fig:twotoytritsentangled}
\end{figure}

\begin{figure}[h!]
 \center{   \includegraphics[width=8cm]{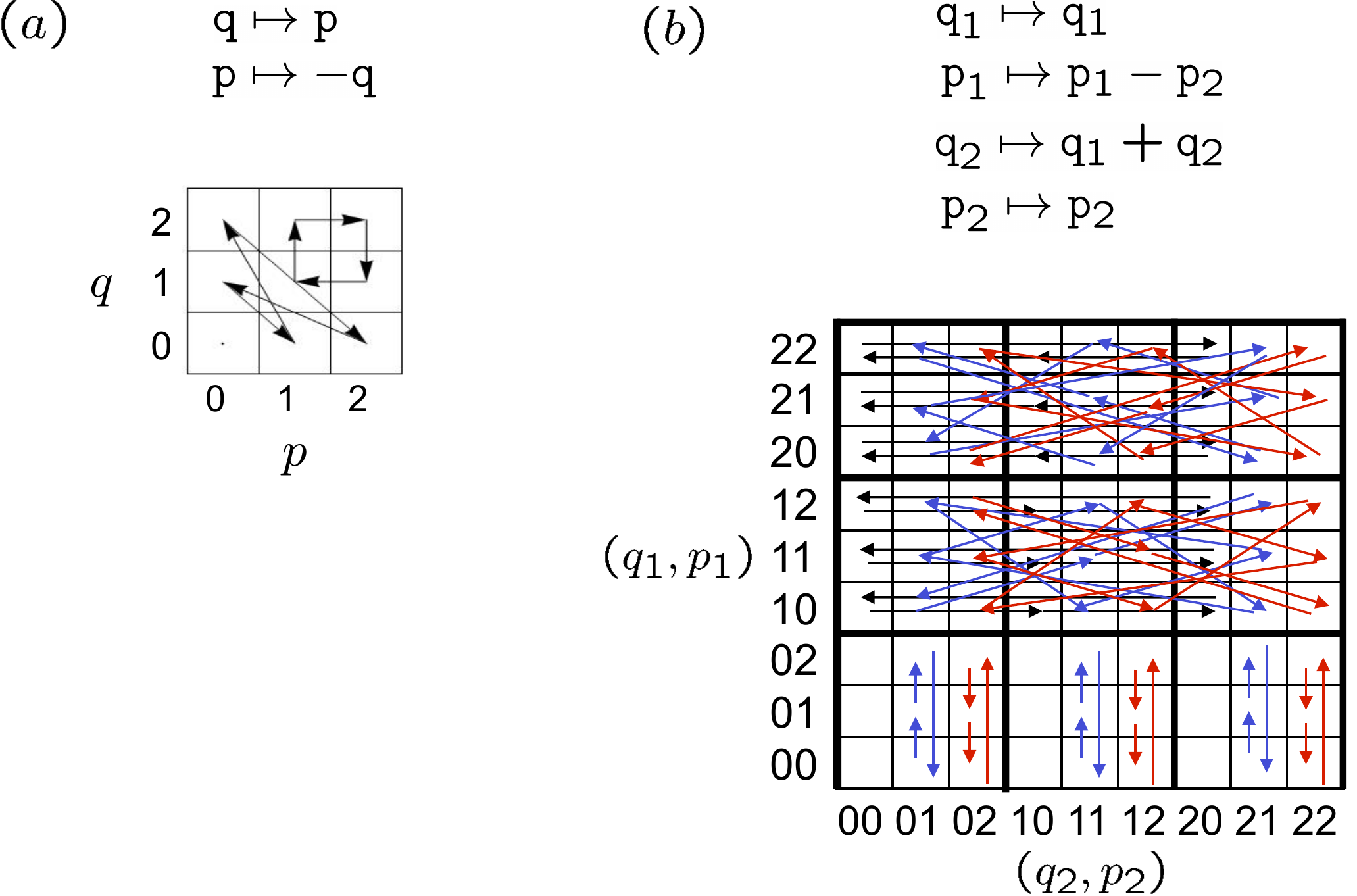} }
\caption{Examples of valid reversible transformations for (a) one trit, and (b) two trits.}
\label{fig:twotrittransf}
\end{figure}

The valid reversible transformations are the affine symplectic maps on the phase-space.  These correspond to a particular subset of the permutations.
 Some examples are depicted in Fig.~\ref{fig:twotrittransf}.

Just as in the continuous case, the valid measurements are those that determine the values of a set of Poisson-commuting quadrature variables.  For instance, for a single trit, there are only four inequivalent measurements: of $q$, of $p$, of $q+p$ and of $q+2p$, depicted in Fig.~\ref{fig:mmtstrits}(a), with different colours denoting different outcomes.  Fig.~\ref{fig:mmtstrits}(b) depicts some valid measurements on a pair of trits.  The left depicts a joint measurement of $q_1$ and $q_2$, which corresponds to a product basis in quantum theory.  The right depicts a joint measurement of $q_1-q_2$ and $p_1 + p_2$, which corresponds to a basis of entangled states.

\begin{figure}[h!]
 \center{   \includegraphics[width=8cm]{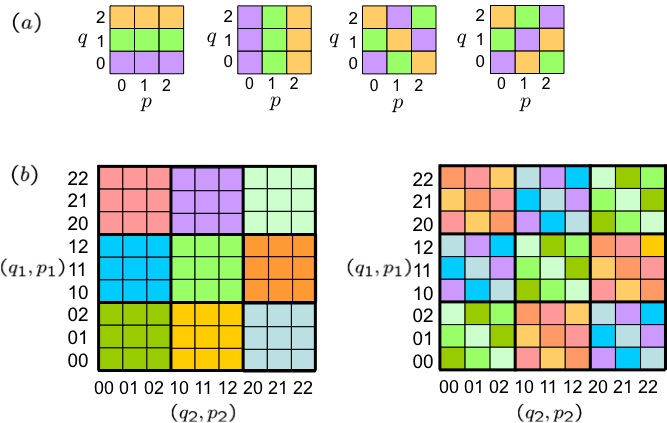}}
\caption{(a) The set of valid sharp measurements on a single trit.  (b) Examples of valid sharp measurements on a pair of trits.  The support of the response function corresponding to a particular outcome is coloured uniformly.}
\label{fig:mmtstrits}
\end{figure}

\subsection{Quadrature epistricted theory of bits}

The epistricted theory of bits is very similar to that of trits, except with $\mathbb{Z}_2$ rather than $\mathbb{Z}_3$ describing the configuration space of a single degree of freedom.
For a single system ($n=1)$, we can depict the phase space $\Omega$ as a 2$\times2$ grid.  There are only three inequivalent linear functionals:
\begin{equation}
q,\;p,\;q+p.
\end{equation}
Unlike the case of trits, $q-p$ is not a distinct functional because in arithmetic modulo 2, $q-p=q+p$.

It follows that the valid epistemic states for a single system are those depicted in Fig.~\ref{fig:epistemicstates1toybit}.  There are six pure states and one mixed state.  We adopt a similar graphical convention to depict the 4-dimensional phase space of a pair of bits as we did for a pair of trits, presented in Fig.~\ref{fig:4by4sudoku}.  Because the combinatorics are not so bad for the case of bits,  we depict \emph{all} of the valid epistemic states for a pair of bits in Fig.~\ref{fig:allepistemicstatestwobits}.  We categorize these into those for which two variables are known (the pure states) and those for which only one or no variable is known (the mixed states).  We also categorize these according to whether they exhibit correlation between the two subsystems or not.  The pure correlated states correspond to the entangled states.  

\begin{figure}[h!]
 \center{   \includegraphics[width=6cm]{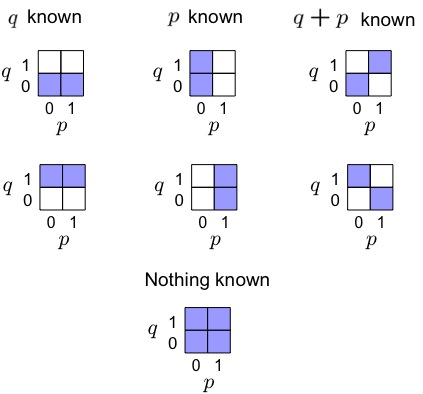}}
\caption{All the valid epistemic states for a single bit. There are six states of maximal knowledge (one variable known) and a single state of nonmaximal knowledge (no variable known).}
\label{fig:epistemicstates1toybit}
\end{figure}

\begin{figure}[h!]
 \center{   \includegraphics[width=6.5cm]{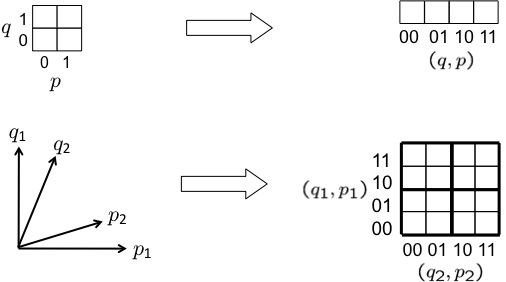}}
\caption{Embedding a 4-dimensional discrete phase space in a 2-dimensional grid.}
\label{fig:4by4sudoku}
\end{figure}

\begin{figure}[h!]
 \center{   \includegraphics[width=10cm]{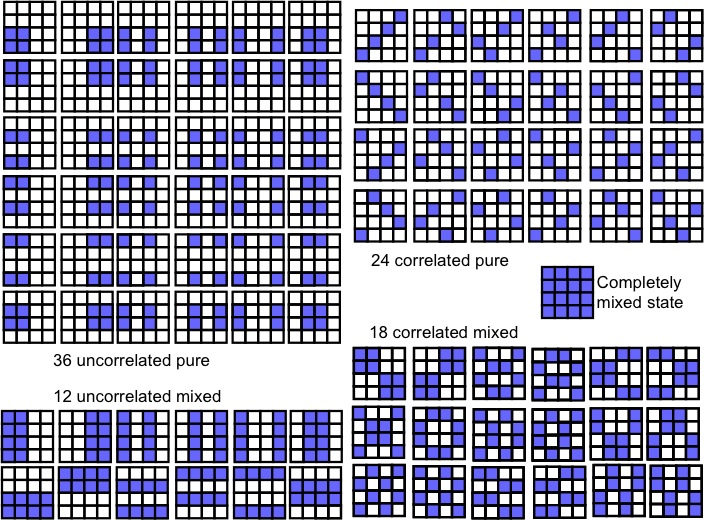}}
\caption{All the valid epistemic states for a pair of bits.}
\label{fig:allepistemicstatestwobits}
\end{figure}

The reversible transformations for the case of a single system $(n=1)$ are particularly simple.  In this case, $\Omega = (\mathbb{Z}_d)^2$, and the symplectic form is simply 
$J= 
\begin{pmatrix}
0 & 1 \\
1 & 0 
\end{pmatrix}.$
because $-1=1$ in arithmetic modulo 2.  As such, the symplectic matrices in this case are those with elements in $\mathbb{Z}_2$ and satisfying $S^T
\begin{pmatrix}
0 & 1 \\
1 & 0 
\end{pmatrix}
 S = \begin{pmatrix}
0 & 1 \\
1 & 0 
\end{pmatrix}$.  These are all the $2\times 2$ matrices having at least one column containing a $0$, that is,
\begin{equation}
\begin{pmatrix}
1 & 0 \\
0 & 1 
\end{pmatrix},\;\;
\begin{pmatrix}
0 & 1 \\
1 & 0 
\end{pmatrix},\;\;
\begin{pmatrix}
1 & 0 \\
1 & 1 
\end{pmatrix},\;\;
\begin{pmatrix}
1 & 1 \\
0 & 1 
\end{pmatrix},\;\;
\begin{pmatrix}
0 & 1 \\
1 & 1 
\end{pmatrix},\;\;
\begin{pmatrix}
1 & 1 \\
1 & 0 
\end{pmatrix},
\end{equation}
corresponding respectively to the tranformations
\begin{equation}
\begin{matrix}
{\tt q \mapsto q} \\
{\tt p \mapsto p}
\end{matrix}\;\;,\;\;
\begin{matrix}
{\tt q \mapsto p} \\
{\tt p \mapsto q}
\end{matrix}\;\;,\;\;
\begin{matrix}
{\tt q \mapsto q} \\
{\tt p \mapsto q+p}
\end{matrix}\;\;,\;\;
\begin{matrix}
{\tt q \mapsto q+p} \\
{\tt p \mapsto p}
\end{matrix}\;\;,\;\;
\begin{matrix}
{\tt q \mapsto p} \\
{\tt p \mapsto q+p}
\end{matrix}\;\;,\;\;
\begin{matrix}
{\tt q \mapsto q+p} \\
{\tt p \mapsto q}
\end{matrix}.
\end{equation}
Each of these symplectic transformations can be composed with the four possible phase-space displacements, 
\begin{equation}
\begin{matrix}
{\tt q \mapsto q} \\
{\tt p \mapsto p}
\end{matrix}\;\;,\;\;
\begin{matrix}
{\tt q \mapsto q+1} \\
{\tt p \mapsto p}
\end{matrix}\;\;,\;\;
\begin{matrix}
{\tt q \mapsto q} \\
{\tt p \mapsto p+1}
\end{matrix}\;\;,\;\;
\begin{matrix}
{\tt q \mapsto q+1} \\
{\tt p \mapsto p+1}
\end{matrix}.
\end{equation}
In all, this leads to 24 reversible symplectic affine transformations, which are depicted in Fig.~\ref{fig:transf1toybit}.  Given that there are only 24 permutations on the discrete phase space, we see that {\em every} reversible ontic transformation is physically allowed in this case.
\begin{figure}[h!]
 \center{   \includegraphics[width=10cm]{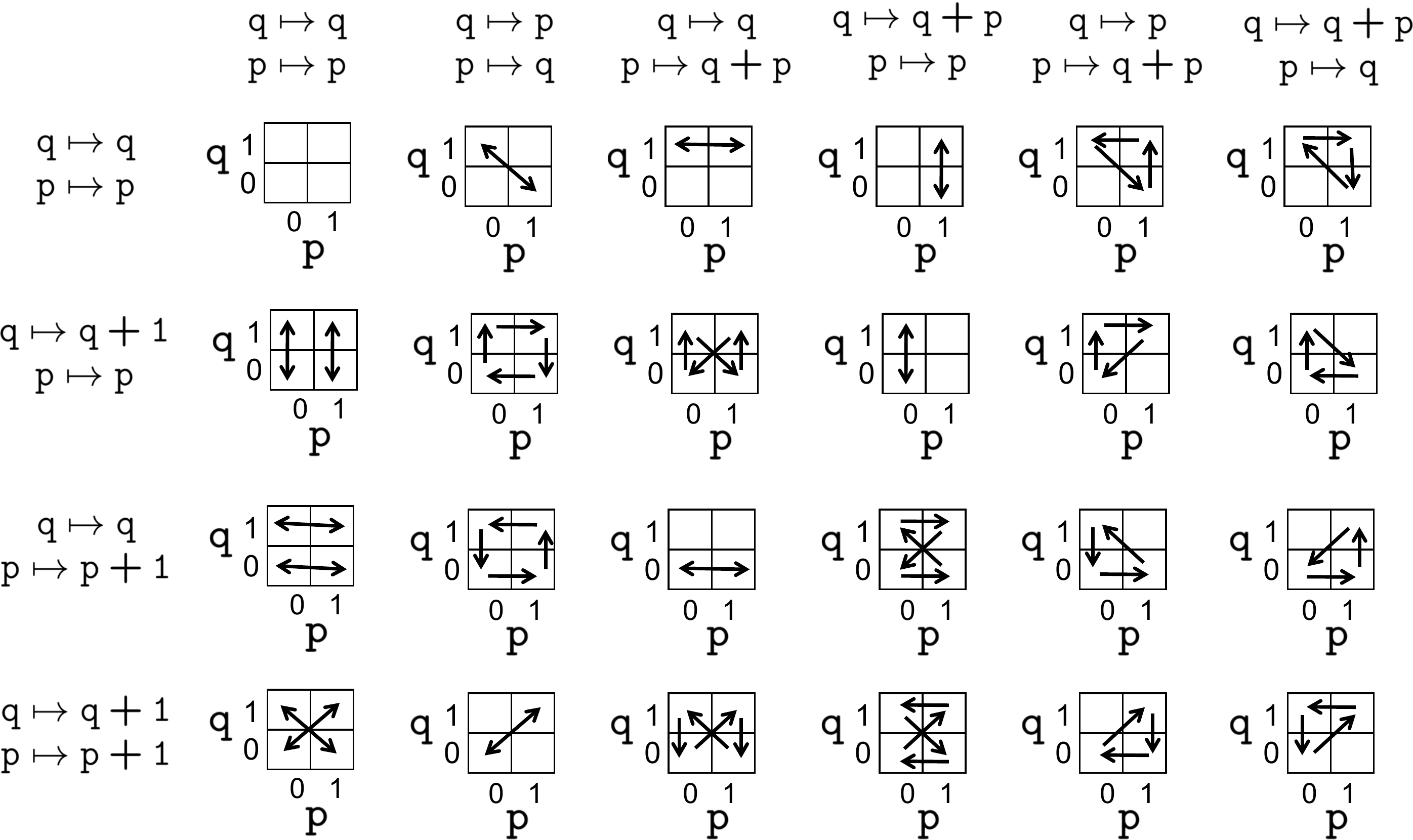}}
\caption{All the valid transformations for a single bit.}
\label{fig:transf1toybit}
\end{figure}
 

On the other hand, for a pair of systems $(n=2)$, only a subset of the permutations of the ontic states correspond to valid sympectic affine transformations. 

In Fig.~\ref{fig:mmtsbits}(a), we present the valid reproducible measurements on a single bit, and in Fig.~\ref{fig:mmtsbits}(b) we present some examples of such measurements on a pair of bits, one corresponding to a product basis and the other an entangled basis.

\begin{figure}[h!]
 \center{   \includegraphics[width=5cm]{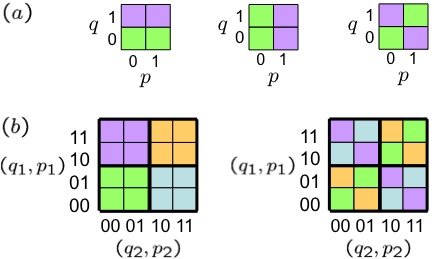}}
\caption{(a) The set of valid measurements on a single bit.  (b) Examples of valid measurements on a pair of bits.}
\label{fig:mmtsbits}
\end{figure}

\section{Quadrature quantum subtheories}\label{quadraturesubtheories}

We now shift our attention to quantum theory, and build up to a definition of the subtheories of quantum theory that our epistricted theories will ultimately be shown to reproduce.

\subsection{Quadrature observables}

We are interested in describing collections of elementary systems that each describe some continuous or discrete degree of freedom.   If the elementary system is a continuous degree of freedom, it is associated with the Hilbert space $\mathcal{L}^2(\mathbb{R})$, the space of square-integrable functions on $\mathbb{R}$. For the case of $n$ such systems, the Hilbert space is $\mathcal{L}^2(\mathbb{R})^{\otimes n}=\mathcal{L}^2(\mathbb{R}^n)$.  The sorts of discrete degrees of freedom we consider are those wherein all the elementary systems have $d$ levels where $d$ is a prime.  These are described by the Hilbert space $\mathbb{C}^d$. For $n$ such systems, the Hilbert space is $(\mathbb{C}^{d})^{\otimes n} = \mathbb{C}^{dn}$.

We seek to describe both discrete and continuous systems in the language of symplectic structure.  For a scalar field, for instance, we describe each mode of the field in terms of a pair of field quadratures.  In the example of a 2-level system, even though the physical degree of freedom in question may be spin or polarization, we seek to understand it in terms of a configuration variable and its canonically conjugate momentum.  In all of these cases, we will conventionally refer to the pair of conjugate variables, regardless of the degrees of freedom they describe, as `position' and `momentum'. 

We wish to present the quadrature subtheories for the continuous and discrete cases in a unified manner.  Towards this end, we will avoid using a Hermitian operator to represent the quantum measurement associated to a quadrature variable.  The reason is that although this works well for the continuous case, it fails to make sense in the discrete case.  Recall that in the continuous case, we can define Hermitian operators on $\mathcal{L}^2(\mathbb{R})$, denoted $\hat{q}$ and $\hat{p}$, and satisfying the commutation relation 
$[ \hat{q},\hat{p} ]= \hat{\mathbb{1}},$
where $[ \cdot, \cdot]$ denotes the matrix commutator and $\hat{\mathbb{1}}$ is the identity operator on $\mathcal{L}^2(\mathbb{R})$.  
In the discrete setting, however, we would expect the operators associated to the discrete position and momentum variables to have eigenvalues in the finite field $\mathbb{Z}_d$, whereas the eigenvalues of Hermitian operators are necessarily real.  
Even if we did pick a pair of Hermitian operators to serve as discrete position and momentum observables, these would necessarily fail to provide an analogue of the commutation relation $[ \hat{q},\hat{p} ]= \hat{\mathbb{1}},$
 because in a finite-dimensional Hilbert space, the commutator of any two Hermitian operators has vanishing trace and therefore cannot be proportional to the identity operator on that space.

In any case, within the fields of quantum foundations and quantum information, there has been a move away from representing measurements 
by Hermitian operators because  the eigenvalues of these operators are merely arbitrary labels of the measurement outcomes and have no operational significance. It is only the projectors in the spectral resolution of such a Hermitian operator that appear in the Born rule and hence only these that are relevant to the operational statistics.  Therefore, a measurement with outcome set $K$ is associated with a set of projectors $\{ \Pi_k : k \in K\}$ such that $\Pi_k^2 = \Pi_k,\; \forall k\in K$ and $\sum_{k\in K} \Pi_k = \mathbb{1}$ (integral in the case of a continuum of outcomes).
Such a set is called a \emph{projector-valued measure} (PVM).  

In the continuous variable case, we define the position observable, denoted $\mathcal{O}_q$, to be the PVM consisting of projectors onto position eigenstates,
\[
\mathcal{O}_q \equiv \{ \hat{\Pi}_q( {\tt q})  : {\tt q}\in \mathbb{R} \},
\]
where
\[
\hat{\Pi}_q( {\tt q}) \equiv |{\tt q}\rangle_q \langle {\tt q}|.
\]
The momentum observable, denoted $\mathcal{O}_p$, is defined to be the PVM of projectors onto momentum eigenstates
\[
\mathcal{O}_p \equiv \{\hat{\Pi}_p( {\tt p}) : {\tt p} \in \mathbb{R} \},
\]
where
\[
\hat{\Pi}_p( {\tt p}) \equiv |{\tt p}\rangle_p \langle {\tt p}|,
\]
and the momentum eigenstates 
are related to the position eigenstates by a Fourier transform,
\begin{equation}
	| {\tt p} \rangle_p \equiv \frac{1}{2 \pi \hbar} \int_{\mathbb{R}} {\rm d}{\tt q} e^{i \frac{ {\tt qp}}{\hbar}} | {\tt q}\rangle_q.
\end{equation}
Strictly speaking, one needs to make use of rigged Hilbert space to define position and momentum eigenstates rigorously but we will adopt the standard informal treatment of such states here.


In the discrete case, we can also define position and momentum observables in this way.  A discrete position basis for $\mathbb{C}^d$  (which one can think of as the {\it computational basis} in a quantum information setting) can be chosen arbitrarily.  Denoting this basis by  $\{ |{\tt q}\rangle_q : {\tt q}\in  \mathbb{Z}_d \}$, the PVM defining the position observable, denoted $\mathcal{O}_q$, is
\[
\mathcal{O}_q \equiv \{  \hat{\Pi}_q( {\tt q}) : {\tt q}\in \mathbb{Z}_d \},
\]
where $\hat{\Pi}_q( {\tt q}) \equiv |{\tt q}\rangle_q \langle {\tt q}|$. We can define a discrete momentum basis, denoted $\{ |{\tt p}\rangle_p : {\tt p} \in \mathbb{Z}_d \}$, via a discrete Fourier transform,
\begin{equation}
	| {\tt p}\rangle_p \equiv \frac{1}{\sqrt {d}} \sum_{{\tt q} \in \mathbb{Z}_d} e^{i 2\pi \frac{ {\tt qp}}{d}} | {\tt q}\rangle.
\end{equation}
and in terms of it, the PVM defining the momentum observable,
\[
\mathcal{O}_p \equiv \{ \hat{\Pi}_p( {\tt p}) : {\tt p} \in  \mathbb{Z}_d \},
\]
where $\hat{\Pi}_p( {\tt p}) \equiv |{\tt p}\rangle_p \langle {\tt p}|$.
If one does not associate a Hermitian operator to each observable, then joint measurability of two observables can no longer be decided by the commutation of the associated Hermitian operators.  Rather, it is determined by whether the associated PVMs commute or not, where two PVMs are said to commute if every projector in one commutes with every projector in the other. 


To define the rest of the quadrature observables (and the commuting sets of these), we must first define a unitary representation of the symplectic affine transformations.
We begin by specifying the unitaries that correspond to phase-space displacements.  
To do this in a uniform manner for discrete and continuous degrees of freedom, we define functions $\chi: \mathbb{R} \to \mathbb{C}$ and $\chi: \mathbb{Z}_d \to \mathbb{C}$ as
\begin{eqnarray}
\chi({\tt c}  ) &=& e^{i \frac{{\tt c}}{\hbar}} \textrm{ for } {\tt c}\in \mathbb{R}\nonumber \\
\chi({\tt c}  ) &=& e^{i \frac{2\pi}{d} {\tt c}} \textrm{ for } {\tt c}\in \mathbb{Z}_d, \textrm{ when $d$ is an odd prime}\nonumber \\
\chi({\tt c} ) &=& e^{i \frac{\pi}{2} {\tt c}} \textrm{ for } {\tt c}\in \mathbb{Z}_d, \textrm{ when $d=2$.}
\end{eqnarray}
In the continuous case, this is the standard exponential function; in the discrete case where $d$ is an odd prime, $\chi({\tt a})$ is the ${\tt a}$th power of the  $d$th root of unity; in the discrete case where $d=2$, $\chi({\tt a})$ is the ${\tt a}$th power of the  fourth (not the second) root of unity.
In terms of this function, we can define a unitary that shifts the position by ${\tt q}$, where ${\tt q} \in \mathbb{R}$ in the continuous case and ${\tt q} \in \mathbb{Z}_d$ in the discrete case, as
\begin{align}
\hat{S}({\tt q}) &= \sum_{{\tt p} \in \mathbb{R}/\mathbb{Z}_d} \chi({\tt q p}) |{\tt p}\rangle_{p} \langle {\tt p}|\nonumber\\
&= \sum_{{\tt q}' \in \mathbb{R}/\mathbb{Z}_d} |{\tt q}'-{\tt q}\rangle_q \langle {\tt q}'|
\end{align}
and a unitary that boosts the momentum by ${\tt p}$, where ${\tt p} \in \mathbb{R}$ in the continuous case and ${\tt p} \in \mathbb{Z}_d$ in the discrete case, as
\begin{align}
\hat{B}({\tt p}) &= \sum_{{\tt q} \in \mathbb{R}/\mathbb{Z}_d} \chi({\tt q p}) |{\tt q}\rangle_q \langle {\tt q}|\nonumber\\
&= \sum_{{\tt p}' \in \mathbb{R}/\mathbb{Z}_d} |{\tt p}'-{\tt p}\rangle_{p} \langle {\tt p}'|
\end{align}
Note that the shift unitaries do not commute with the boost unitaries.  The unitaries corresponding to phase-space displacements---typically called the \emph{Weyl operators}---are proportional to products of these. In particular, the Weyl operator associated with the phase-space displacement vector ${\bf a} = ({\tt q},{\tt p}) \in \mathbb{R}^2/(\mathbb{Z}_d)^2$ is defined to be
\begin{equation} \label{def:Weyl}
{\hat W}({\bf a}) = \chi(2 {\tt pq}) \hat{S}({\tt q}) \hat{B}({\tt p}).
\end{equation}
This is easily generalized to the case of a phase-space displacement for $n$ degrees of freedom,
${\bf a} = ({\tt q}_1,{\tt p}_1,\dots,{\tt q}_n,{\tt p}_n) \in \mathbb{R}^{2n}/(\mathbb{Z}_d)^{2n}$ via the tensor product,
\begin{equation} \label{generalWeyl}
{\hat W}({\bf a}) = \bigotimes_{i=1}^n \chi(2 {\tt p}_i {\tt q}_i) \hat{S}({\tt q}_i) \hat{B}({\tt p}_i).
\end{equation}
For ${\bf a},{\bf a}' \in \Omega$, the product of the corresponding Weyl operators is
\begin{equation} \label{productWeyl}
{\hat W}({\bf a}){\hat W}({\bf a}')= \chi(2 \langle {\bf a}, {\bf a}'\rangle) {\hat W} ( {\bf a} + {\bf a}').
\end{equation}
Thus it is clear that the Weyl operators constitute a projective unitary representation of the group of phase-space displacements ${\bf m}\to {\bf m}+{\bf a}$, where the composition law is
\begin{equation}\label{phasespacedispls}
(\cdot + {\bf a}) + {\bf a}' = \cdot + ({\bf a} +{\bf a}').
\end{equation}

Next, we define a projective unitary representation $\hat{V}$ of the symplectic group acting on a $2n$-dimensional phase space $\Omega$.  For every $2n \times 2n$ symplectic matrix $S : \Omega \to \Omega$, there is a unitary $\hat{V}(S)$ acting on the Hilbert space $\mathcal{L}(\mathbb{R}^n)/\mathbb{C}^{dn}$, such that
\begin{equation}\label{productsymp}
\hat{V}(S)\hat{V}(S')=e^{i\phi} \hat{V}(SS')
\end{equation}
for some phase factor $e^{i \phi}$.  These can be defined via their action on the Weyl operators.  Specifically, $\forall {\bf a} \in \Omega$,
\begin{equation}\label{SympactiononWeyl}
\hat{V}(S) \hat{W}({\bf a}) \hat{V}^{\dag}(S) \propto \hat{W}(S {\bf a}).
\end{equation}

 In the following, we will often consider the action of these unitaries under conjugation, therefore, we define the superoperators associated to phase-space displacement ${\bf a}$ and symplectic matrix $S$,
\begin{align}
\mathcal{W}({\bf a}) (\cdot) &\equiv \hat{W}({\bf a}) \cdot \hat{W}({\bf a})^{\dag},\nonumber \\
\mathcal{V}(S) (\cdot) &\equiv \hat{V}(S) \cdot \hat{V}(S)^{\dag}.
\end{align}
Note that Eq.~\eqref{SympactiononWeyl} implies that
\begin{align}
\mathcal{W}({\bf a}) \circ \mathcal{V}(S) (\cdot) =\mathcal{V}(S) \circ \mathcal{W}(S^{-1}{\bf a}) (\cdot).
\end{align}

In the classical theory, every Poisson-commuting set of quadrature functionals
$\{ f^{(1)}, f^{(2)}, \dots, f^{(k)} \}$ 
can be obtained from every other such set by a symplectic linear transformation (here, $k\le n$).  The proof is as follows.  If $f^{(i)}={\bf f^{(i)}}^T{\bf z}$ is a quadrature functional, then so is $\tilde{f}^{(i)}= (S{\bf f}^{(i)})^T {\bf z}$ for all $i \in \{ 1, \dots ,k\}$ when $S$ is a symplectic matrix.  Furthermore, if the initial set is Poisson-commuting, then $\langle {\bf f}^{(i)}, {\bf f}^{(j)} \rangle =0$ for all $i \ne j \in \{ 1, \dots, k\}$, and then because
\begin{align}
\langle {\bf \tilde{f}}^{(i)}, {\bf \tilde{f}}^{(j)} \rangle &= \langle S{\bf f}^{(i)}, S{\bf f}^{(j)} \rangle \nonumber\\
 &= \langle {\bf f}^{(i)}, {\bf f}^{(j)} \rangle,
\end{align}
it follows that $\langle {\bf \tilde{f}}^{(i)}, {\bf \tilde{f}}^{(j)} \rangle =0$ for all $i \ne j \in \{ 1, \dots, k\}$ so the final set is Poisson-commuting as well.  Here, we have used the fact that the symplectic inner product is invariant under the action of a symplectic matrix.

We can define commuting sets of quantum quadrature \emph{observables} similarly.   
Consider a single degree of freedom, $\Omega = \mathbb{R}^2/\mathbb{Z}_d^2$.  Denote by $S_f$ the symplectic matrix that takes 
the position functional $q$ to a 
quadrature functional $f$, so that $S_f {\bf q} = {\bf f}$.  (Given that ${\bf q} \equiv (1,0)$, we see that ${\bf f}$ is the first column of $S_f$.) We define the quadrature \emph{observable} associated with $f$, denoted $\mathcal{O}_f$,
to be the image under the action of the unitary $\hat{V}(S_f)$ of the position observable, that is,
\[
\mathcal{O}_f \equiv \{ \hat{\Pi}_f ({\tt f}): {\tt f} \in \mathbb{Z}_d \},
\]
where
\begin{align}
 \hat{\Pi}_f ({\tt f}) &\equiv   \mathcal{V}(S_f) (\hat{\Pi}_q ({\tt f}) ).
\end{align}

It is useful to note how these projectors transform under phase-space displacements and symplectic matrices. 
By definition of the quadrature observables, we infer that  for a symplectic matrix $S$,
\begin{equation} \label{effectsymp}
\mathcal{V}(S) (\hat{\Pi}_{f}({\tt f}) ) = \hat{\Pi}_{Sf}({\tt f}),
\end{equation}
where $S f$ denotes the quadrature functional associated to the vector $S {\bf f}\in \Omega$.  
Now consider the action of a Weyl superoperator.  First note that the projectors onto position eigenstates transform as
\[
\mathcal{W}({\bf a}) (\hat{\Pi}_q({\tt q}))=\hat{\Pi}_q({\tt q}+q({\bf a})).
\]
It follows that if $f = S_f q$, then
\begin{align} \label{effectaffine}
\mathcal{W}({\bf a}) (\hat{\Pi}_f({\tt f}))&=\mathcal{W}({\bf a}) ( \hat{\Pi}_{S_f q}({\tt f}) ),\nonumber\\
&=\mathcal{W}({\bf a}) \mathcal{V}(S_f ) (\hat{\Pi}_q ({\tt f}) ),\nonumber\\
&=\mathcal{V}(S)  \mathcal{W}(S_f^{-1}{\bf a}) (\hat{\Pi}_q ({\tt f}) ),\nonumber \\
&=\mathcal{V}(S)  (\hat{\Pi}_q ({\tt f}+ q(S_f^{-1}{\bf a})) ),\nonumber \\
&=\hat{\Pi}_f({\tt f}+f({\bf a})).
\end{align}

In all, 
\begin{align} \label{effectaffine}
\mathcal{V}(S) \mathcal{W}({\bf a}) (\hat{\Pi}_f({\tt f}))&= \hat{\Pi}_{S f}({\tt f} + f({\bf a})).
\end{align}




The case of $n$ degrees of freedom, $\Omega = \mathbb{R}^2/\mathbb{Z}_d^2$, is treated similarly.   In this case, our quadrature observables need not be rank-1.  Our fiducial quadrature can be taken to be $q_1$, the position functional for system 1.  The associated quadrature observable is
\[
\mathcal{O}_{q_1} \equiv \{  \hat{\Pi}_{q_1} ({\tt q}_1)  \otimes \mathbb{1}_2 \otimes \cdots \otimes \mathbb{1}_n : {\tt q}_1 \in \mathbb{R}/\mathbb{Z}_d \}.
\]
 For an arbitrary functional on the $n$ systems, $f:\Omega \to \mathbb{R}/\mathbb{Z}_d$, we find the symplectic matrix $S_f$ such that $S_f {\bf q}_1 = {\bf f}$, and we define the quadrature observable associated with $f$ to be
\[
\mathcal{O}_{f} \equiv \{  \hat{\Pi}_{f}({\tt f}) : {\tt f} \in \mathbb{R}/\mathbb{Z}_d \}.
\]
where
\[
\hat{\Pi}_{f}({\tt f}) \equiv \hat{V}(S_f) \left( \hat{\Pi}_{q_1} ({\tt f}) \otimes \mathbb{1}_2 \otimes \cdots \otimes \mathbb{1}_n \right) \hat{V}(S_f)^{\dag}.
\]
It follows that for every classical quadrature \emph{functional} $f$, there is a corresponding quadrature \emph{observable} $\mathcal{O}_f$, which stands in relation to the position and momentum observables as $f$ stands to the position and momentum functionals.

As an aside, one may note that in the continuous variable case, the quadrature observables are simply the spectral resolutions of those Hermitian operators that are linear combinations of position and momentum operators. 
In particular, for a quadrature observable $\mathcal{O}_f$ associated to a vector ${\bf f} \in \Omega$, the associated Hermitian operator is simply
\[
\hat{f} = {\bf f}^T {\bf \hat{z}},
\]
where
\begin{equation}\label{vectorqpobservables}
{\bf \hat{z}} \equiv (\hat{q}_1 ,\hat{p}_1, \dots,\hat{q}_n,\hat{p}_n),
\end{equation}
is the vector of position and momentum operators. 
Hence for every classical quadrature \emph{variable} $f = {\bf f}^T {\bf z}$, as defined in Eq.~\eqref{eq:f}, there is a corresponding quadrature \emph{operator} $\hat{f}={\bf f}^T {\bf \hat{z}}$, where we have simply replaced the position and momentum functionals with their corresponding Hermitian operators. 

We are now in a position to describe the commuting sets of quadrature observables.  A set of quadrature observables $\{ \mathcal{O}_{f^{(1)}}, \dots, \mathcal{O}_{f^{(k)}} \}$ is a commuting set if and only if the corresponding quadrature functionals $\{ f^{(1)}, \dots, f^{(k)} \}$ are Poisson-commuting.  The proof is as follows.  The functionals $\{ f^{(1)}, \dots, f^{(k)} \}$ are Poisson-commuting if and only if they can be obtained by some symplectic transformation from any other such set, in particular, the set of position functionals for the first $k$ systems,  $\{ q_1, \dots, q_k \}$.  In other words, $\{ f^{(1)}, \dots, f^{(k)} \}$ are Poisson-commuting if and only if there is a sympectic matrix $S$  such that ${\bf f}^{(i)} = S{\bf q}_i$ for all $i \in \{1,\dots, k\}$ (which implies that the vectors ${\bf f}^{(i)}$ are the first $k$ columns of $S$).
Given the definition of quadrature observables, this condition is equivalent to the statement that there exists a symplectic matrix $S$ such that $\mathcal{O}_{f^{(i)}} = \hat{V}(S) \mathcal{O}_{q_i} \hat{V}(S)^{\dag}$ for all $i \in \{1,\dots, k\}$.
But given that the  elements of the set $\{ \mathcal{O}_{q_1}, \dots, \mathcal{O}_{q_k} \}$ (the position observables for the first $k$ systems) commute, and commutation relations are preserved under a unitary, it follows that the elements of the set $\{ \mathcal{O}_{f^{(1)}}, \dots, \mathcal{O}_{f^{(k)}} \}$ commute if and only if there exists such an $S$, hence they commute if and only if the corresponding quadrature functionals $\{ f^{(1)}, \dots, f^{(k)} \}$ Poisson-commute. 

Again, this has a simple interpretation in the continuous variable case.  There,  it is easy to verify that the matrix commutator of two quadratures operators is equal to the symplectic inner product of the corresponding vectors, that is, 
$[\hat{f},\hat{g}]= \langle {\bf f},{\bf g}\rangle.$
In particular, it follows that $[\hat{f},\hat{g}]=0$ if and only if $\langle {\bf f},{\bf g}\rangle=0$,
which provides another proof of the fact that a commuting set of quadrature observables is associated with an isotropic subspace of the phase space.

As described in Sec.~\ref{validepistemicstates}, every set of Poisson-commuting quadrature functionals defines an istropic subspace $V\subseteq \Omega$ and therefore the sets of commuting quadrature observables are also parameterized by the isotropic subspaces of $\Omega$.  If a commuting set of quadrature observables is such that the corresponding quadrature functionals are associated with an isotropic subspace $V$, then this set defines a single quadrature observable, denoted $\mathcal{O}_V$, by
\[
\mathcal{O}_V = \{ \hat{\Pi}_{V}({\bf v}): {\bf v} \in V\}
\]
where
\begin{equation}\label{pivv}
\hat{\Pi}_{V}({\bf v}) \equiv \prod_{{\bf f}^{(i)}: {\rm span}({\bf f}^{(i)})=V} \hat{\Pi}_{f^{(i)}}\left(f^{(i)}({\bf v})\right).
\end{equation}
For instance, the quadrature functionals $f = q_1 - q_2$ and $g = p_1 +p_2$ are Poisson-commuting and therefore the associated quadrature observables, $\mathcal{O}_f$ and $\mathcal{O}_g$, commute, which is to say that the projectors $\{\hat{\Pi}_{f}({\tt f}) : {\tt f}\in \mathbb{R}/\mathbb{Z}_d \}$ all commute with the projectors $\{\hat{\Pi}_{g}({\tt g}) : {\tt g}\in \mathbb{R}/\mathbb{Z}_d \}$.  If $V = {\rm span}\{{\bf f},{\bf g}\}$, then the possible pairs of values for the two observables can be expressed as the possible components of a vector ${\bf v} \in V$ along the basis vectors ${\bf f}$ and ${\bf g}$ respectively.  
These are the pairs $\{ ({\tt f}, {\tt g})\}$ such that ${\tt f} = {\bf f}^T{\bf v} = f({\bf v}) $ and ${\tt g} = {\bf g}^T{\bf v} = g({\bf v}) $ for some ${\bf v} \in V$.
It follows that we can parametrize the possible values of this commuting set by vectors ${\bf v}\in V$.   

In the continuous variable case, $\hat{\Pi}_f({\tt f})$ is the projector onto the eigenspace of $\hat{q}_1 - \hat{q}_2$ with eigenvalue ${\tt f}$, $\hat{\Pi}_g({\tt g})$ is the projector onto the eigenspace of $\hat{p}_1 + \hat{p}_2$ with eigenvalue ${\tt g}$, and $\hat{\Pi}_{V}({\bf v})$ is the projector onto the joint eigenspace of $\hat{q}_1 - \hat{q}_2$ and $\hat{p}_1 + \hat{p}_2$ with eigenvalues $({\tt f},{\tt g})$, which corresponds to an Einstein-Podolsky-Rosen entangled state.

With this background established, we are in a position to define the quadrature quantum subtheories. 

\subsection{Characterization of quadrature quantum subtheories}

In this section, we define a quadrature subtheory of the quantum theory for a given system (discrete or continuous).
In the discrete case, this subtheory is closely connected to the stabilizer formalism, a connection that we make precise in the appendix.

\subsubsection{The set of valid quantum states}


In order to define the valid quantum states in the quadrature quantum subtheory, we use the guiding analogy of Sec.~\ref{complementarity}, together with the isomorphism between quadrature functionals and quadrature observables noted above.

As we have just seen, for both the discrete and continuous cases, every commuting set of quadrature observables is associated to an isotropic subspace $V \subset \Omega$.  Furthermore, every set of values that these observables can jointly take is associated to a vector ${\bf v} \in V$. We have denoted the projector that yields these values by $\hat{\Pi}_V({\bf v})$.  
The quantum states that are part of the quadrature subtheory, termed \emph{quadrature states}, 
are simply the density operators that are proportional to such projectors. 
It follows that the quadrature states are parameterized by pairs consisting of an isotropic subspace $V$ and a valuation vector ${\bf v} \in V$ (in precisely the same way as one parametrizes the set of valid epistemic states in the epistricted classical theory).  Specifically, it is the set of states of the form
\begin{equation}\label{qdelt}
\rho_{V,{\bf v}} = \frac{1}{\mathcal{N}_V } \hat{\Pi}_{V}({\bf v}),
\end{equation}
where $V\subseteq \Omega$ is isotropic, ${\bf v} \in V$, and $\mathcal{N}_V$ is a normalization factor.
Equivalently, if $\{ {\bf f}^{(i)} \}$ is a basis of $V$, then 
\begin{equation}\label{qprod}
\rho_{V,{\bf v}} \equiv \frac{1}{\mathcal{N}_V } \prod_{\{ {\bf f}^{(i)} : {\rm span}\{ {\bf f}^{(i)} \} =V \} } \hat{\Pi}_{f^{(i)}} \left( f^{(i)}({\bf v}) \right).
\end{equation}

\subsubsection{The set of valid transformations}

Because the overall phase of a Hilbert space vector is physically irrelevant, physical states are properly represented by density operators, and consequently a reversible physical transformation is not represented by a unitary operator but rather by the superoperator corresponding to conjugation by that unitary.  

When a Weyl operator $\hat{W}({\bf a})$ acts by conjugation, it defines what we will call the \emph{Weyl superoperator}, 
\[
\mathcal{W}({\bf a})(\cdot) \equiv \hat{W}({\bf a}) (\cdot) \hat{W}({\bf a})^{\dag}.
\]
Unlike the Weyl operators of two phase-space displacements, which,
by Eq.~\eqref{productWeyl}, commute if and only if the corresponding phase-space displacement vectors have vanishing symplectic inner product, 
\[
[ \hat{W}({\bf a}) , \hat{W}({\bf a}') ]=0 \;\textrm{  if and only if }\;   \langle {\bf a}, {\bf a}'\rangle=0,
\]
the Weyl superoperators of any two phase-space displacements necessarily commute,
\[
[\mathcal{W}({\bf a}) ,\mathcal{W}({\bf a}') ]=0\; \forall {\bf a},{\bf a}' \in \Omega.
\]
This follows from Eq.~\eqref{productWeyl} and the skew-symmetry of the symplectic inner product.  It follows that
\[
\mathcal{W}({\bf a}) \mathcal{W}({\bf a}')  = \mathcal{W}({\bf a} + {\bf a}')\; \forall {\bf a},{\bf a}' \in \Omega.
\]
 As such, the Weyl superoperators constitute a nonprojective representation of the group of phase-space displacements, Eq.~\eqref{phasespacedispls}.

Next, we consider the projective unitary representation $\hat{V}$ of the symplectic group acting by conjugation. 
This defines a superoperator representation of the symplectic group which is nonprojective, that is, for 
\[
\mathcal{V}(S)(\cdot) \equiv \hat{V}(S) (\cdot) \hat{V}(S)^{\dag},
\]
we have
\[
\mathcal{V}(S)\mathcal{V}(S')=\mathcal{V}(SS').
\]

The Clifford group of unitaries is defined as those which, when acting by conjugation, take the set of Weyl operators to itself.   That is, a unitary $\hat{U}$ is in the Clifford group if $\forall {\bf b} \in \Omega$,
\begin{equation}\label{CliffordactiononWeyl}
\hat{U} \hat{W}({\bf b}) \hat{U}^{\dag} = c({\bf b}) \hat{W}(S {\bf b}),
\end{equation}
for some maps $c : \Omega \to \mathbb{C}$ and $S:  \Omega \to\Omega$.

It turns out that every such unitary can be written as a product of a Weyl operator and an element of the unitary projective representation of the symplectic group, that is,
\[
\hat{U}(S,{\bf a}) = \hat{W}({\bf a}) \hat{V}(S),
\]
for some symplectic matrix $S: \Omega \to \Omega$ and phase-space vector ${\bf a} \in \Omega$.  
From Eqs.~\eqref{productWeyl} and \eqref{productsymp} we infer that a product of such unitaries is 
\begin{equation}\label{Ucomposition}
\hat{U}(S,{\bf a})\hat{U}(S',{\bf a}')= e^{i\phi} \hat{U}(SS', S{\bf a}' + {\bf a} ).
\end{equation}
for some phase factor $\phi$.  Recalling Eq.~\eqref{sympaffine}, it is clear that the Clifford group of unitaries $\hat{U}(S,{\bf a})$ constitutes a projective representation of the symplectic affine group.

When a Clifford unitary $\hat{U}(S,{\bf a})$ acts by conjugation, it defines what we will call a \emph{Clifford superoperator} $\mathcal{U}(S,{\bf a})(\cdot) \equiv \hat{U}(S,{\bf a}) (\cdot) \hat{U}(S,{\bf a})^{\dag}$. 
It follows that 
\[
\mathcal{U}(S,{\bf b})\mathcal{U}(S',{\bf b}')= \mathcal{U}(SS', {\bf b} + S{\bf b}'),
\]
and therefore, recalling Eq.~\eqref{sympaffine}, these form a nonprojective representation of the symplectic affine group.

The reversible transformations that are included in quadrature quantum mechanics are precisely those associated with Clifford superoperators.  These map every quadrature state to another quadrature state.

The valid \emph{irreversible} transformations in the quadrature subtheory are those that admit of a Stinespring dilation of the following form: the system is coupled to an ancilla of arbitrary dimension that is prepared in a quadrature state, the system and ancilla undergo a reversible transformation associated with a Clifford superoperator, 
and a partial trace operation is performed on the ancilla.  

\subsubsection{The set of valid measurements}

Finally, the reproducible measurements included in quadrature quantum mechanics are simply those associated with a commuting set of quadrature observables.  Recall that these are parametrized by the isotropic subspaces $V \subset \Omega$, and correspond to PVMs of the form $\{ \hat{\Pi}_V({\bf v}) : {\bf v}\in V\}$, as defined in Eq.~\eqref{pivv}. 

The most general measurement allowed is one whose Naimark extension can be achieved by preparing an ancilla in a quadrature state, coupling to the system via a Clifford superoperator, and finally measuring a commuting set of quadrature observables on the ancilla.  

\section{Comparing quantum subtheories to epistricted theories}

\subsection{Equivalence for continuous and odd-prime discrete cases}

The operational equivalence result is proven using the Wigner representation.  The latter is a quasi-probability representation of quantum mechanics, wherein Hermitian operators on the Hilbert space are represented by real-valued functions on the corresponding classical phase space.  

For the case of $n$ continuous degrees of freedom, where the Hilbert space is $\mathcal{L}^2(\mathbb{R}^{n})$ and the phase space is $\mathbb{R}^{2n}$, the Wigner representation is a well-known formulation of quantum theory, particularly in the field of quantum optics~\cite{wigner1932quantum,gardiner2004quantum}.  For the case of \emph{discrete} degrees of freedom, there are many proposals for how to define a quasi-probability representation that is analogous to Wigner's but for a discrete phase space.  We here make use of a proposal due to Gross~\cite{gross2006hudson}, which is built on (but distinct from) a proposal  by Wootters~\cite{gibbons2004discrete}.  For $n$ $d$-level systems (qudits), where $d$ is a prime, the phase space is taken to be $(\mathbb{Z}_d)^{2n}$.

We shall attempt to present the proof for the continuous case and for the odd-prime discrete case in a unified notation.  Towards this end, we will provide a definition of the Wigner representation that is independent of the nature of the phase space. In the case of $\Omega = \mathbb{R}^{2n}$ and $\Omega = (\mathbb{Z}_d)^{2n}$ for $d$ an odd prime, our definition will reduce, respectively, to the standard Wigner representation and the discrete Wigner representation proposed by Gross~\cite{gross2006hudson}.   Marginalizing over the entire phase space $\Omega$ will be denoted by a sum over $\Omega$ in all of our expressions, which will be taken to represent a discrete sum in the discrete case and an integral with a phase-space invariant measure in the continuous case.  


\subsubsection{Wigner representation of quantum theory}

The Wigner representation of an operator $\hat{O}$, denoted $\hat{W}_{\hat{O}}({\bf m})$, can be understood as the components of that operator in a particular basis  for the vector space of Hermitian operators where the inner product is the Hilbert-Schmidt inner product, $\langle \hat{O}, \hat{O}' \rangle \equiv {\rm tr}(\hat{O} \hat{O}')$.  The elements of this operator basis are indexed by the elements of the phase space and termed the \emph{phase-space point operators}.  
Denoting this operator basis by $\{ \hat{A}({\bf m}) : {\bf m}\in \Omega\}$, we have
\begin{equation}
  \hat{W}_{\hat{O}}({\bf m})=\mathrm{Tr}[\hat{O} \hat{A}({\bf m})]\,.
\end{equation}

The phase-space point operators can be defined as the symplectic Fourier transform of the Weyl operators (which in turn are defined for both continuous and discrete degrees of freedom in Eq.~\eqref{generalWeyl}),
  \begin{equation}\label{defnpointoperators}
\hat{A}({\bf m}) \equiv \frac{1}{\mathcal{N}_{\Omega}} \sum_{{\bf m}' \in \Omega} \chi(\langle {\bf m}, {\bf m}' \rangle) \hat{W}({\bf m}').
\end{equation}
where $\mathcal{N}_{\Omega}$ is a normalization factor chosen to ensure that 
\[
{\rm Tr}[\hat{A}({\bf m})]=1.
\]
The key property of the phase-space point operators is that they transform covariantly under symplectic affine transformations, 
\begin{equation}\label{pointcovariance}
\mathcal{U}(S,{\bf a}) \left[ \hat{A}({\bf m}) \right] \propto \hat{A}(S{\bf m}+{\bf a}),
\end{equation}
which can be inferred from Eq.~\eqref{defnpointoperators} and the manner in which the Weyl operators transform under the action of the Clifford superoperators, Eq.~\eqref{SympactiononWeyl}.  This in turn implies that the Wigner representation of an operator also transforms covariantly under symmplectic affine transformations,
\begin{align}
  \hat{W}_{\mathcal{U}(S,{\bf a})(\hat{O})}({\bf m})&= {\rm tr}\left( \mathcal{U}(S,{\bf a})(\hat{O}) \hat{A}({\bf m})   \right)\nonumber\\
  &= {\rm tr}\left(\hat{O}\;  \mathcal{U}(S^{-1}, -{\bf a}) ( \hat{A}({\bf m}) )  \right)\nonumber \\
  &= \hat{W}_{\hat{O}}(S^{-1}{\bf m}-{\bf a}).
\end{align}

In both the discrete and continuous cases, we have
\[
\frac{1}{\mathcal{N}_{\Omega}}  \sum_{{\bf m}\in \Omega} \chi(\langle {\bf m}, {\bf m}' \rangle) = \delta_{\bf 0}({\bf m}'),
\]
where $\delta_{\bf 0}({\bf m}') = \prod_{i=1}^n \delta({\tt q}'_i)\delta({\tt p}'_i) $ for ${\bf m}' \equiv ({\tt q}'_1,{\tt p}'_1, \dots, {\tt q}'_n,{\tt p}'_n)$   and where $\delta$ denotes the Dirac-delta function in the continuous case and a Kronecker-delta in the discrete case. 
It then follows from Eq.~\eqref{defnpointoperators} that
\begin{eqnarray}
\sum_{{\bf m}\in \Omega} \hat{A}({\bf m}) &=&   \sum_{{\bf m}' \in \Omega} \delta({\bf m}') \hat{W}({\bf m}'),\nonumber\\
&=& \hat{W}({\bf 0}),\nonumber\\
&=& \mathbb{1}.
\end{eqnarray}
Consequently the trace of an arbitrary operator is given by the normalization of the corresponding Wigner representation on the phase-space, 
\[
{\rm Tr}(\hat{O})= \sum_{{\bf m}\in \Omega} W_{\hat{O}}({\bf m}).
\]

The phase-space point operators are Hermitian, and therefore the Wigner representation of any Hermitian operator is real-valued.
They are orthogonal, 
\begin{equation}\label{OrthogPoint}
{\rm Tr}\left( \hat{A}({\bf m})  \hat{A}({\bf m}') \right) \propto \delta({\bf m} - {\bf m}'),
\end{equation}
and form a complete basis for the operator space relative to the Hilbert-Schmidt inner product, that is, for arbitrary $\hat{O}$,
\[
\sum_{{\bf m}\in \Omega} \hat{A}({\bf m}) {\rm Tr}\left(   \hat{A}({\bf m}) \hat{O} \right) = \hat{O}.
\]
It follows from this completeness that for any pair of Hermitian operators $\hat{O}$ and $\hat{O}'$,
\begin{equation}
  \label{eq:HSinnerproductinWigrep}
  \mathrm{Tr}\left( \hat{O} \hat{O}'\right) = \sum_{{\bf m}\in \Omega} W_{\hat{O}}({\bf m}) W_{\hat{O}'} ({\bf m})\,.
\end{equation}


The Wigner representation of a quantum state $\rho$ is the function $W_{\rho}: \Omega \to \mathbb{R}$ defined by
\begin{equation}
W_{\rho}({\bf m}) = {\rm Tr}[\rho \hat{A}({\bf m})],
\end{equation}
where the fact that ${\rm Tr}(\rho)=1$ implies
\begin{equation}
\sum_{{\bf m} \in \Omega} W_{\rho}({\bf m}) = 1.
\end{equation}



A superoperator $\mathcal{E}$ corresponding to the transformation $\rho \mapsto \mathcal{E}(\rho)$ can be modelled in the Wigner representation by a conditional quasiprobability function $W_{\mathcal{E}}({\bf m}'| {\bf m})$ such that 
\[
W_{\rho}({\bf m}) \mapsto \sum_{{\bf m}'\in \Omega} W_{\mathcal{E}}({\bf m}| {\bf m}') W_{\rho}({\bf m}').
\]
Specifically, the function $W_{\mathcal{E}}: \Omega \times \Omega \to \mathbb{R}$ is defined as 
\begin{equation}\label{WignerSuperoperator}
W_{\mathcal{E}}({\bf m}| {\bf m}') = {\rm Tr}\left[ \hat{A}({\bf m}) \mathcal{E}\left(\hat{A}({\bf m}')\right) \right]
\end{equation}
If $\mathcal{E}$ is trace-preserving, then 
\begin{equation}
\sum_{{\bf m}\in \Omega} W_{\mathcal{E}}({\bf m}| {\bf m}') = 1.
\end{equation}

A sharp measurement with outcome set $K$, associated with a projector-valued measure
$\mathcal{O} \equiv \{\hat{\Pi}_{\bf k}: {\bf k} \in K\}$ is represented by a conditional quasi-probability function $W_{\mathcal{O}} : K \times \Omega \to \mathbb{R}$ defined by
\begin{align}
W_{\mathcal{O}}({\bf k}| {\bf m}) &= W_{\hat{\Pi}_{{\bf k}}}({\bf m}),\nonumber\\
&= {\rm Tr}[\hat{\Pi}_{{\bf k}} \hat{A}({\bf m})],
\end{align}
where the fact that $\sum_{{\bf k} \in K} \hat{\Pi}_{\bf k} =\hat{\mathbb{1}}$ implies that
\begin{equation}
\sum_{{\bf k} \in K} W_{\mathcal{O}}({\bf k}| {\bf m}) = 1.
\end{equation}

Finally, we can infer from Eq.~\eqref{eq:HSinnerproductinWigrep} that the probability of obtaining outcome ${\bf k}$ in a measurement of $\{\hat{\Pi}_{\bf k}: {\bf k} \in K\}$ on the state $\rho$ is 
expressed in the Wigner representation as
\begin{equation}\label{kkk}
\mathrm{Tr}\left( \hat{\Pi}_{\bf k} \rho  \right) = \sum_{{\bf m}\in \Omega} W_{\mathcal{O}}({\bf k}| {\bf m}) W_{\rho}({\bf m})  \,.
\end{equation}
Similarly, if a transformation associated with the completely-positive trace-preserving map $\mathcal{E}$ acts between the preparation and the measurement, then the probability of obtaining outcome $k$ is expressed in the Wigner representation as
\begin{equation}\label{lll}
\mathrm{Tr}\left( \hat{\Pi}_{\bf k} \mathcal{E} (\rho)  \right) = \sum_{{\bf m}\in \Omega}   W_{\mathcal{O}}({\bf k}| {\bf m}) \sum_{{\bf m}'\in \Omega} W_{\mathcal{E}}({\bf m}| {\bf m}') W_{\rho}({\bf m}') \,.
\end{equation}

Note that if $W_{\rho}({\bf m})$ is nonnegative, it can be interpreted as a probability distribution on phase-space.  Similarly, if $W_{\mathcal{O}}({\bf k}| {\bf m}) $ and $W_{\mathcal{E}}({\bf m}| {\bf m}')$ are nonnegative, then can be interpreted as conditional probability distributions.
In this case, Eqs.~\eqref{kkk} and \eqref{lll} for the probability of a measurement outcome can be understood as an application of the law of total probability, in analogy with Eqs.~\eqref{opstat1} and \eqref{opstat2}.  This sort of interpretation is indeed possible for the quadrature quantum subtheories and yields precisely the operational predictions of the quadrature epistricted theory.  To show this, 
it remains only to show that the Wigner representation of the preparations, transformations and measurements of the quadrature subtheory are precisely equal to those of the quadrature epistricted theory.


\subsubsection{Wigner representation of the quadrature quantum subtheory}

Our proof of equivalence relies on two of the defining features of the Wigner representation.
First, the fact that the Wigner representation transforms covariantly under the symplectic affine transformations,
Second, the fact that the Wigner representation of the projectors defining the position and momentum observables are the response functions associated to the position and momentum functionals in the classical theory, that is,
\begin{eqnarray}
W_{\hat{\Pi}_{q_i}({\tt q}_i)}({\bf m}) &=& \delta(q_i({\bf m})-{\tt q}_i),\nonumber \\
W_{\hat{\Pi}_{p_j}({\tt p}_j)}({\bf m}) &=& \delta(p_j({\bf m})-{\tt p}_j).
\end{eqnarray}

It follows from these facts that the Wigner representation of the projectors in the quadrature observable $\mathcal{O}_f$ are equal to the response functions associated with the corresponding quadrature functional $f$ in the classical theory,
\begin{eqnarray}\label{quadWig}
W_{\hat{\Pi}_{f}({\tt f})}({\bf m}) 
&=& W_{\hat{\Pi}_{S_f q_1}({\tt f})}({\bf m}),\nonumber\\
&=& W_{\mathcal{V}(S_f)(\hat{\Pi}_{q_1}({\tt f}))}({\bf m}),\nonumber\\
&=& W_{\hat{\Pi}_{q_1}({\tt f})}(S_f^{-1}{\bf m}),\nonumber\\
&=& \delta(q_1(S_f^{-1}{\bf m})-{\tt f}),\nonumber\\
&=& \delta((S_f q_1)({\bf m})-{\tt f}),\nonumber\\
&=& \delta(f({\bf m})-{\tt f})\nonumber\\
&=& \delta({\bf f}^T {\bf m}-{\tt f}).
\end{eqnarray}


As noted previously, the sharp measurements that are included in the quadrature quantum subtheory are those associated to a set of commuting quadrature observables, $\{ \mathcal{O}_{f^{(i)}} \}$  which in turn is associated with a PVM $\mathcal{O}_{V'}\equiv \{ \Pi_{V',{\bf v'}}: {\bf v}' \in V'\}$ where $V' = {\rm span}\{ {\bf f}^{(i)} \}$.  
Given that $\hat{\Pi}_{V}({\bf v}) \equiv \prod_{\{{\bf f}^{(i)}: {\rm span}({\bf f}^{(i)})=V\}} \hat{\Pi}_{f^{(i)}}\left({\bf f}^{(i)T}{\bf v} \right)$ (Eq.~\eqref{pivv}), and using Eq.~\eqref{quadWig}, we conclude that 
\begin{align}\label{Piexp}
W_{\hat{\Pi}_{V'}({\bf v}')}({\bf m})
&= \prod_{\{{\bf f}^{(i)}: {\rm span}({\bf f}^{(i)})=V\}} \delta({\bf f}^{(i)T}{\bf m}- {\bf f}^{(i)T}{\bf v}).
\end{align}
Recalling Eq.~\eqref{responsefns}, we see that the Wigner representation of the projector valued measure associated with $(V',{\bf v}')$ is the set of response functions associated with $(V',{\bf v}')$ in the quadrature epistricted theory, that is, 
\[
W_{\mathcal{O}_{V'}}({\bf v}'|{\bf m})= \xi_{V'}({\bf v}'|{\bf m}).
\]

The Wigner representation of the quadrature state associated with $(V,{\bf v})$ is
\begin{eqnarray}\label{mmm}
W_{\rho_{V,{\bf v}}}({\bf m}) &=&
{\rm Tr}\left(\rho_{V,{\bf v}} \hat{A}({\bf m}) \right) \nonumber\\
&=&  \frac{1}{\mathcal{N}_V } \prod_{{\bf f}^{(i)} : {\rm span}\{{\bf f}^{(i)} \} =V}  \delta ({\bf f}^{(i)T} {\bf m} - {\bf f}^{(i)T}{\bf v})
\end{eqnarray}
where we have used Eqs.~\eqref{qdelt} and \eqref{Piexp}.  Recalling Eqs.~\eqref{epiprod} and \eqref{defndelta}, we conclude that the Wigner representation of the quadrature state associated with $(V,{\bf v})$ is the epistemic state associated with $(V,{\bf v})$ in the quadrature epistricted theory, that is, 
\[
W_{\rho_{V,{\bf v}}}({\bf m}) = \mu_{V,{\bf v}}({\bf m}).
\]

The Wigner representation of the Clifford superoperator $\mathcal{U}(S,{\bf a})$ is 
\begin{eqnarray}\label{proofequivtransf}
W_{\mathcal{U}(S,{\bf a})}({\bf m}|{\bf m}') &=&
 {\rm Tr}\left[ \hat{A}({\bf m})\mathcal{U}(S,{\bf a})\left(\hat{A}({\bf m}')\right) \right]
 \nonumber\\
&=&  {\rm Tr}\left[ \hat{A}({\bf m})\hat{A}( S{\bf m}' + {\bf a}) \right]
 \nonumber\\
&=& 
 \delta( {\bf m} -(S{\bf m}'+{\bf a})).
\end{eqnarray}
Here, the first equality follows from the form of the Wigner representation of superoperators, Eq.~\eqref{WignerSuperoperator}.  The second equality follows from the fact that the phase-space point operators transform covariantly  under the action of the Clifford superoperators, Eq.~\eqref{pointcovariance}.
The third equality in Eq.~\eqref{proofequivtransf} follows from the orthogonality of the phase-space point operators, Eq.~\eqref{OrthogPoint}.

Recalling Eq.~\eqref{Cliffordcond}, we see that this is precisely the transition probability associated with the symplectic affine transformation, $\Gamma_{S,{\bf a}}({\bf m}|{\bf m}')$, in the quadrature epistricted theory,
\begin{eqnarray}\label{proofequivtransf}
W_{\mathcal{U}(S,{\bf a})}({\bf m}|{\bf m}')&=&
 \Gamma_{S,{\bf a}}({\bf m}|{\bf m}').
\end{eqnarray}

This concludes the proof of equivalence.

\color{black}

\subsection{Inequivalence for bits/qubits}

In the case where $d=2$, the only even prime, the situation is more complicated.  We have shown that in \emph{both} the quadrature epistricted theory of bits and in the quadrature subtheory of qubits, we have: (i) the set of possible operational states is isomorphic to the set of pairs $(V,{\bf v})$ where $V$ is an isotropic subspace of the phase-space $\Omega = (\mathbb{Z}_2)^{2n}$ and ${\bf v}\in V$; (ii) the set of possible sharp measurements is isomorphic to the set of isotropic subspaces $V'$ (with the different outcomes associated to the elements ${\bf v}' \in V'$); (iii) the set of possible reversible transformations is isomorphic to the elements $(S,{\bf a})$ of the symplectic affine group acting on $\Omega = (\mathbb{Z}_2)^{2n}$.  It follows that the operational states, measurements and transformations of one theory are respectively isomorphic to those of the other.  The valid \emph{unsharp} measurements and \emph{irreversible} transformations are defined in terms of the sharp and reversible ones respectively, and they are defined \emph{in the same way} in the quadrature epistricted theory and the quadrature quantum subtheory.  It follows that we also have the unsharp measurements and irreversible transformations of one theory isomorphic to those of the other. 

Despite this strong structural similiarly, the two theories nonetheless make different predictions. The particular algorithm that takes as input a triple of preparation, measurement and transformation and yields as output a probability distribution over measurement outcomes, is not equivalent in the two theories.  More precisely,
\[
{\rm Tr}\left(\hat{\Pi}_{V'}({\bf v}') \mathcal{U}_{S,{\bf a}} (\rho_{V,{\bf v}})\right) \nonumber\\
\ne  \sum_{{\bf m} \in \Omega} \xi_{V'}({\bf v}' | {\bf m}) \sum_{{\bf m}' \in \Omega} \Gamma_{S,{\bf a}}({\bf m}|{\bf m}')  \mu_{V,{\bf v}}({\bf m}').
\]

Gross's Wigner representation for discrete systems only works for systems of dimension $d$ for $d$ a power of an odd prime.  It therefore does not work for $d=2$.  Nonetheless, a Wigner representation can be constructed for the quadrature subtheory of qubits. 
One can define it in terms of tensor products of the phase-space point operators for a single qubit as proposed by Gibbons, Hoffman and Wootters~\cite{gibbons2004discrete}.  In this representation, we have
\[
{\rm Tr}\left(\hat{\Pi}_{V'}({\bf v}')\; \mathcal{U}({S,{\bf a}}) \left(\rho_{V,{\bf v}}\right)\right) \nonumber\\
=  \sum_{{\bf m} \in \Omega} W_{\mathcal{O}_{V'}}({\bf v}' | {\bf m}) \sum_{{\bf m}' \in \Omega} W_{\mathcal{U}(S,{\bf a})}({\bf m}|{\bf m}')  W_{\rho_{V,{\bf v}}}({\bf m}').
\]

So why can't we identify the Wigner representations with the corresponding objects in the epistricted theory, just as we did for $d$ an odd prime and in the continuous case?  The problem is that in the qubit case, the Wigner functions representing quadrature states sometimes go negative.  It follows that these cannot be interpreted as probability distributions over the phase space.  Similarly, the Wigner representations  of quadrature observables and Clifford superoperators also sometimes go negative and hence cannot always be interpreted as conditional probability distributions.

It is also straightforward to prove that no alternative definition of the Wigner representation can achieve positivity.   First, we make use of a fact shown in Ref.~\cite{spekkens2008negativity}, that if a set of preparations and measurements supports a proof of contextuality in the sense of Ref.~\cite{spekkens2005contextuality}, then \emph{all} quasiprobability representations must necessarily involve negativity.  It then suffices to note that the quadrature subtheory is contextual.   There are many ways of seeing this.  For instance, Mermin's magic square proof of contextuality using two qubits~\cite{mermin1993hidden} uses only the resources of the stabilizer theory of qubits.   The same is true of the Greenberger-Horne-Zeilinger proof of nonlocality using three qubits~\cite{greenberger1989going}, which is also a proof of contextuality. 

The quadrature subtheory of qubits simply makes different operational predictions than the quadrature epistricted theory of bits.  It admits of contextuality and nonlocality proofs while the quadrature epistricted theory is local and noncontextual by construction.\footnote{It seems that the quadrature epistricted theory of bits is about as close as one can get to the stabilizer theory for qubits while still being local and noncontextual.}   

By contrast, the quadrature subtheory of odd-prime qudits and the quadrature subtheory of mechanics make precisely the same predictions as the corresponding epistricted theories.  They are consequently devoid of any contextuality or nonlocality because they admit of hidden variable models that are both local and noncontextual---the quadrature epistricted theory \emph{is} the hidden variable model.  Other differences between the two theories are discussed in Ref.~\cite{spekkens2007evidence}.

The conceptual significance of the difference between the quadrature subtheory of qubits and the epistricted theory of bits remains a puzzle, despite various formalizations of the difference~\cite{pusey2012stabilizer,coecke2011phase}.
This puzzle is perhaps the most interesting product of these investigations.  

It shows in particular that whatever conceptual innovation over the classical worldview is required to achieve the phenomenology of contextuality and nonlocality, it must be possible to make sense of this innovation even in the thin air of the quadrature subtheory of qubits.  This is an advantage because the latter theory uses a more meagre palette of concepts than full quantum theory.
For instance, we can conclude that it must be possible to describe the innovation of quantum over classical in terms of possibilistic inferences rather than probabilistic inferences.

\begin{acknowledgements}
I acknowledge Stephen Bartlett and Terry Rudolph for discussions on the quadrature subtheory of quantum mechanics, Jonathan Barrett for suggesting to define the Poisson bracket in the discrete case in terms of finite differences, and Giulio Chiribella and Joel Wallman for comments on a draft of this article.  Much of the work presented here summarizes unpublished results obtained in collaboration with Olaf Schreiber.  Research at Perimeter Institute is supported by the Government of Canada through Industry Canada and by the Province of Ontario through the Ministry of Research and Innovation.
\end{acknowledgements}

\begin{appendix}

\section{Quadrature quantum subtheories and the Stabilizer formalism}


In quantum information theory, there has been a great deal of work on a particular quantum subtheory for discrete systems of prime dimension (qubits and qutrits in particular) which is known as the \emph{stabilizer formalism}~\cite{gottesman1998heisenberg,gross2006hudson}.

A stabilizer state is defined as a joint eigenstate of a set of commuting Weyl operators.
By Eq.~\eqref{productWeyl}, two Weyl operators commute if and only if the corresponding phase-space displacement vectors have vanishing symplectic inner product, 
\[
[ \hat{W}({\bf a}) , \hat{W}({\bf a}') ]=0 \;\textrm{  if and only if }\;   \langle {\bf a}, {\bf a}'\rangle=0.
\]
Consequently, the sets of commuting Weyl operators, and therefore the stabilizer states, are parametrized by the isotropic subspaces of $\Omega$.  Specifically, for each isotropic subspace $M$ of $\Omega$ and each vector ${\bf v}\in JM\equiv \{ J{\bf u} : {\bf u} \in M \}$, we can define a stabilizer state $\rho^{(\rm stab)}_{M,{\bf v}}$ as the projector onto the joint eigenspace of $\{ \hat{W}({\bf a}): {\bf a} \in M\}$ where $\hat{W}({\bf a})$ has eigenvalue $\chi(\langle {\bf v}, {\bf a} \rangle)$.

 We will show here that the set of stabilizer states is precisely equivalent to the set of quadrature states.

To describe the connection, it is convenient to introduce some additional notions from symplectic geometry.
The \emph{symplectic complement} of a subspace $V$, which we will denote as $V^C$, is the set of vectors that have vanishing symplectic inner product with every vector in $V$,
\[
V^{C}\equiv \{ {\bf m}' \in \Omega : {\bf m'}^T J {\bf m} =0, \; \forall {\bf m} \in V \},
\]
where $J$ is the symplectic form, defined in Eq.~\eqref{eq:SymplecticForm}. 
This is not equivalent to the Euclidean complement of a subspace $V$, which is the set of vectors that have vanishing Euclidean inner product with every vector in $V$,
\[
V^{\perp}\equiv \{ {\bf m}' \in \Omega : {\bf m'}^T {\bf m} =0, \; \forall {\bf m} \in V \},
\]
The composition of the two complements will be relevant in what follows.  It turns out that the latter is related to $V$ by an isomorphism; it is simply the image of $V$ under left-multiplication by the symplectic form $J$,  
\[
(V^{\perp})^{C}\equiv JV = \{ J{\bf u} : {\bf u} \in V \}.
\]
Note that if $V$ is isotropic, then $(V^{\perp})^{C}$ is as well. 



\begin{proposition}\label{equivstabquad}
Consider the quadrature state $\rho_{V,{\bf v}}$, with $V$ an isotropic subspace of $\Omega$ and ${\bf v}\in V$ a valuation vector, which is the joint eigenstate of the commuting set of quadrature observables $\{ \mathcal{O}_{f} : {\bf f}\in V \}$, where the eigenvalue of $\mathcal{O}_f$ is $f({\bf v})$.
This is equivalent to the stabilizer state $\rho^{({\rm stab})}_{M,{\bf v}}$, which is the joint eigenstate of the commuting set of Weyl operators $\{ \hat{W}({\bf a}) : {\bf a} \in M\}$ where $M \equiv (V^{\perp})^C$ is the isotropic subspace that is the symplectic complement of the Euclidean complement of $V$, and where the eigenvalue of $\hat{W}({\bf a})$ is $\chi(\langle {\bf v},{\bf a} \rangle)$.
\end{proposition}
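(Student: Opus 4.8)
The plan is to show that $\rho_{V,{\bf v}}$ and $\rho^{({\rm stab})}_{M,{\bf v}}$ are literally the same density operator, by writing both as the same group‑averaged projector built from Weyl operators. By definition $\rho^{({\rm stab})}_{M,{\bf v}}$ is the normalization of the projector onto the joint eigenspace of the commuting family $\{\hat{W}({\bf a}):{\bf a}\in M\}$ on which $\hat{W}({\bf a})$ acts as $\chi(\langle {\bf v},{\bf a}\rangle)$; since $M$ is isotropic the $\hat{W}({\bf a})$ genuinely commute and ${\bf a}\mapsto\chi(\langle {\bf v},{\bf a}\rangle)$ is a character of $M$ (Eq.~\eqref{productWeyl} together with bilinearity), so that projector is given by the standard averaging formula, and after dividing by its trace $d^{n}/|M|$ one gets $\rho^{({\rm stab})}_{M,{\bf v}}=\tfrac{1}{d^{n}}\sum_{{\bf a}\in M}\overline{\chi(\langle {\bf v},{\bf a}\rangle)}\,\hat{W}({\bf a})$ (an invariant integral replacing the sum in the continuous case). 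It therefore suffices to establish the operator identity
\[
\hat{\Pi}_{V}({\bf v})\;=\;\frac{1}{|M|}\sum_{{\bf a}\in M}\overline{\chi(\langle {\bf v},{\bf a}\rangle)}\,\hat{W}({\bf a}),\qquad M=JV=(V^{\perp})^{C},
\]
since then $\mathcal{N}_{V}={\rm Tr}\,\hat{\Pi}_{V}({\bf v})=d^{n}/|M|$ and $\rho_{V,{\bf v}}=\tfrac{1}{\mathcal{N}_{V}}\hat{\Pi}_{V}({\bf v})=\rho^{({\rm stab})}_{M,{\bf v}}$.

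I would first check this for the fiducial isotropic subspace $V_{0}={\rm span}\{{\bf q}_{1},\dots,{\bf q}_{k}\}$, for which $M_{0}=JV_{0}={\rm span}\{{\bf p}_{1},\dots,{\bf p}_{k}\}$. Here $\hat{\Pi}_{V_{0}}({\bf v}_{0})=\bigotimes_{i\le k}|{\tt q}_{i}\rangle_{q}\langle {\tt q}_{i}|\otimes\bigotimes_{i>k}\mathbb{1}_{i}$ with $q_{i}({\bf v}_{0})={\tt q}_{i}$, and the right‑hand side factorizes over the systems because $\hat{W}$ restricted to $M_{0}$ is a tensor product of the boost operators $\hat{B}({\tt s})=\hat{W}({\tt s}\,{\bf p}_{i})$ and $\langle {\bf v}_{0},{\tt s}\,{\bf p}_{i}\rangle={\tt s}\,{\tt q}_{i}$; the $i$th factor is $\tfrac1d\sum_{{\tt s}}\chi(-{\tt s}\,{\tt q}_{i})\hat{B}({\tt s})$. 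Since $\hat{B}({\tt s})=\sum_{{\tt q}'}\chi({\tt q}'{\tt s})|{\tt q}'\rangle_{q}\langle {\tt q}'|$, Fourier inversion over the field gives $\tfrac1d\sum_{{\tt s}}\chi(-{\tt s}\,{\tt q}_{i})\hat{B}({\tt s})=|{\tt q}_{i}\rangle_{q}\langle {\tt q}_{i}|$ (a $\delta$-function in the continuous case), so the identity holds for $(V_{0},{\bf v}_{0})$.

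The general case follows by conjugating the whole identity by $\hat{V}(S)$ for a symplectic $S$ with $SV_{0}=V$ — such an $S$ exists because all isotropic subspaces of a given dimension lie in a single ${\rm Sp}$-orbit, as already exploited in the proof that all Poisson-commuting sets are symplectically equivalent — followed by a phase-space displacement adjusting the valuation. On the left, $\mathcal{V}(S)(\hat{\Pi}_{V_{0}}({\bf v}_{0}))=\hat{\Pi}_{V}({\bf v}')$ by Eq.~\eqref{effectsymp}, with ${\bf v}'$ the transported valuation vector (which differs from $S^{-T}{\bf v}_{0}$ by a vector of $V^{\perp}$). On the right, $\mathcal{V}(S)$ sends $\hat{W}({\bf a})$ to $\hat{W}((S^{-1})^{T}{\bf a})$ up to a phase $c({\bf a})$ (using the convention for $\hat{V}$ consistent with Eq.~\eqref{effectsymp}); reindexing ${\bf b}=(S^{-1})^{T}{\bf a}$ and invoking the symplectic identity $(S^{-1})^{T}J=JS$ shows the sum now runs over $(S^{-1})^{T}M_{0}=(S^{-1})^{T}JV_{0}=JSV_{0}=JV=M$, while the coefficients become $\overline{\chi(\langle S^{-T}{\bf v}_{0},{\bf b}\rangle)}=\overline{\chi(\langle {\bf v}',{\bf b}\rangle)}$ since $\langle\cdot,{\bf b}\rangle$ annihilates $V^{\perp}=M^{C}$ for ${\bf b}\in M$. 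The residual phases contribute a constant times a character of $M$; they are forced to be trivial because both sides are projectors of rank $d^{n-k}$ and the projectors $\hat{\Pi}_{V}({\bf v}')$ for distinct ${\bf v}'$ are mutually orthogonal. This gives the displayed identity for $(V,{\bf v}')$, completing the proof; the argument is uniform for the continuous case and all prime $d$ (for $d=2$ the Weyl operators are taken to be the Hermitian Paulis and each single-functional resolution is the familiar two-term projector $\tfrac12(\mathbb{1}+\chi(\cdots)\hat{W}(J{\bf f}))$).

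The step I expect to be most delicate is precisely this phase bookkeeping: keeping straight the factors of two hidden inside $\chi$ and in the definition of $\hat{W}$, together with the projective cocycle of the metaplectic representation $\hat{V}$, so as to be sure that they collapse to the single character $\chi(\langle {\bf v},\cdot\rangle)$ rather than silently shifting the valuation. An alternative route that sidesteps all of this in the continuous and odd-prime cases is to compute $W_{\rho^{({\rm stab})}_{M,{\bf v}}}$ directly in the Wigner representation: group-averaging turns it into a character sum over $M$, which — using that $W_{\hat{W}({\bf a})}$ is a phase-space character and that $M^{C}=((V^{\perp})^{C})^{C}=V^{\perp}$ — collapses to the uniform distribution on $V^{\perp}+{\bf v}$; this equals $W_{\rho_{V,{\bf v}}}=\mu_{V,{\bf v}}$ by Eq.~\eqref{mmm}, and faithfulness of the Wigner map yields $\rho^{({\rm stab})}_{M,{\bf v}}=\rho_{V,{\bf v}}$.
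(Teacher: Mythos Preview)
Your proof is correct and follows essentially the same route as the paper: verify in the fiducial case that position eigenstates are eigenstates of the boost Weyl operators, then transport by the metaplectic unitary $\hat V(S)$ to a general isotropic $V$, using the symplectic relation that carries $JV_{0}$ to $JV$. Your explicit group-averaged projector formula and the Wigner-representation alternative are not in the paper's argument, but they are sound embellishments of the same fiducial-plus-transport strategy rather than a genuinely different approach.
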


\begin{proof}
Consider first a single degree of freedom.  Every quadrature observable $\mathcal{O}_{f}$ can be expressed in terms of the position observable $\mathcal{O}_q$ as follows: if $S_f$ is the symplectic matrix such that ${\bf f} = S_f {\bf q}$, then $\mathcal{O}_{f} = \hat{V}(S_f) \mathcal{O}_{q} \hat{V}(S_f)^{\dag}$.  Now note that the position basis can equally well be characterized as the eigenstates of the boost operators.  Specifically, $\hat{B}({\tt p}) |{\tt q}\rangle_q = \chi({\tt q p})  |{\tt q}\rangle_q$, that is, an element $|{\tt q}\rangle_q$ of the position basis is an eigenstate of the set of operators $\{ \hat{B}({\tt p}): {\tt p}\in \mathbb{R}/\mathbb{Z}_d \}$ where the eigenvalue of $\hat{B}({\tt p})$  is $\chi({\tt qp})$.  The element $|{\tt f}\rangle_f$ of the basis associated to the quadrature operator $\mathcal{O}_f$ is defined as $|{\tt f}\rangle_f \equiv \hat{V}(S_f)|{\tt f}\rangle_q$ and consequently can be characterized as an eigenstate of the set of operators $\{ \hat{V}(S_f) \hat{B}({\tt g}) \hat{V}(S_f)^{\dag}: {\tt g}\in \mathbb{R}/\mathbb{Z}_d \}$ where the eigenvalue of $\hat{V}(S_f) \hat{B}({\tt g}) \hat{V}(S_f)^{\dag}$  is $\chi({\tt fg})$.  This can be stated equivalently as follows: the element $|{\tt f}\rangle_f$ of the basis associated to the quadrature operator $\mathcal{O}_f$ is the eigenstate of the set of Weyl operators $\{ \hat{W}({\bf a}): {\bf a}\in {\rm span}(S_f{\bf p}) \}$ where the eigenvalue of $\hat{W}({\bf a})$  is $\chi( {\tt f}\langle {\bf f},{\bf a}\rangle)$. Noting that 
\begin{eqnarray}
{\rm span}(S_f{\bf p})&={\rm span}(S_f J{\bf q}),\nonumber\\
&= {\rm span}(J S_f {\bf q}),\nonumber\\
&= {\rm span}(J {\bf f}),
\end{eqnarray}
 we can just as well characterize $\hat{\Pi}_f({\tt f})$ as the projector onto the joint eigenspace of the Weyl operators $\{ \hat{W}({\bf a}): {\bf a}\in {\rm span}(J {\bf f}) \}$.


Now consider $n$ degrees of freedom. The quadrature state associated with $(V,{\bf v})$ has the form
\begin{equation}\label{pivv}
\rho_{V,{\bf v}} 
= \frac{1}{\mathcal{N}} \prod_{{\bf f}^{(i)}: {\rm span}({\bf f}^{(i)})=V} \hat{\Pi}_{f^{(i)}}\left(f^{(i)}({\bf v})\right).
\end{equation}
By an argument similar to that used for a single degree of freedom, this
is an eigenstate of the Weyl operators $\{ \hat{W}({\bf a}): {\bf a}\in {\rm span}(J {\bf f}^{(i)}) \}$ 
where the eigenvalue of $\hat{W}({\bf a})$  is $\chi( \langle {\bf v}, {\bf a} \rangle)$.
Noting that ${\rm span}(J {\bf f}^{(i)}) = J V = (V^{\perp})^C$,  we have our desired isomorphism.  
\end{proof}

The stabilizer formalism allows all and only the Clifford superoperators as reversible transformations. The sharp measurements that are included in the stabilizer formalism are the ones associated with PVMs corresponding to the joint eigenspaces of a set of commuting Weyl operators, which, by proposition \ref{equivstabquad}, are precisely those corresponding to the joint eigenspaces of a set of commuting quadrature observables.  It follows that the stabilizer formalism coincides precisely with the quadrature subtheory.  

Gross has argued that the discrete analogue of the Gaussian quantum subtheory for continuous variable systems is the stabilizer formalism~\cite{gross2006hudson}.  Our results show that the connection between the discrete and continuous variable cases is a bit more subtle than this.  In the continuous variable case, there is a distinction between the Gaussian subtheory and the quadrature subtheory, with the latter being contained within the former.  In the discrete case, there is no distinction, so the stabilizer formalism can be usefully viewed as either the discrete analogue of the Gaussian subtheory or as the discrete analogue of the quadrature subtheory.  While Gross's work showed that the stabilizer formalism for discrete systems could be defined similarly to how one defines Gaussian quantum mechanics, our work has shown that it can also be defined in the same way that one defines quadrature quantum mechanics.

To our knowledge, quadrature quantum mechanics has not previously received much attention.  However, given that it is a natural continuous variable analogue of the stabilizer formalism for discrete systems, it may provide an interesting paradigm for exploring quantum information processing with continuous variable systems.

\end{appendix}

\bibliographystyle{unsrt}
\bibliography{EpistrictedTheoriesReview.bib}

\end{document}